\documentclass[lettersize,journal]{IEEEtran}

\usepackage[papersize={8.5in,11in}, left=0.625in, right=0.625in, top=0.7in, bottom=1in]{geometry}
\usepackage{amssymb, amsmath, amsfonts, amsthm}
\usepackage{graphicx, cite, mathtools}
\usepackage{dsfont}
\usepackage[caption=false,font=normalsize]{subfig}

\newtheorem{theorem}{Theorem}
\newtheorem{lemma}{Lemma}
\newtheorem{remark}{Remark}

\usepackage{epstopdf}
\usepackage{url}
\usepackage{amsmath}
\def\mathclap#1{\text{\hbox to 0pt{\hss$\mathsurround=0pt#1$\hss}}}

\usepackage{tikz}

\allowdisplaybreaks
\setlength\abovedisplayskip{0.5pt}
\setlength\belowdisplayskip{0.5pt}
\setlength{\columnsep}{0.2in}
\usepackage{lipsum,graphicx,multicol}
\usepackage{algorithm}
\usepackage{algorithmic}

\begin{document}

\title{{An Evolutionary Game for Mobile User Access Mode Selection in sub-$6$ GHz/mmWave Cellular Networks}}

\author{Christodoulos Skouroumounis, \IEEEmembership{Member, IEEE}
	and~Ioannis Krikidis, \IEEEmembership{Fellow, IEEE}
	\thanks{Christodoulos Skouroumounis and Ioannis Krikidis are with the Department of Electrical and Computer Engineering, University of Cyprus, Nicosia 1678 (Email: {cskour03, krikidis}@ucy.ac.cy).}
	\thanks{This work was co-funded by the European Regional Development Fund and the Republic of Cyprus through the Research and Innovation Foundation under the project INFRASTRUCTURES/1216/0017 (IRIDA). It has also received funding from the European Research Council (ERC) under the European Union's Horizon 2020 research and innovation programme (Grant agreement No. 819819).}\vspace{-0.8cm}}

\maketitle
\begin{abstract}
By utilizing the combination of two powerful tools i.e., stochastic geometry (SG) and evolutionary game theory (EGT), in this paper, we study the problem of mobile user (MU) mode selection in heterogeneous sub-$6$ GHz/millimeter wave (mmWave) cellular networks. Particularly, by using SG tools, we first propose an analytical framework to assess the performance of the considered networks in terms of average signal-to-interference-plus-noise (SINR) ratio, average rate, and mobility-induced time overhead, for scenarios with user mobility{.} According to the SG-based framework, an EGT-based approach is presented to solve the problem of access mode selection. Specifically, two EGT-based models are considered, where for each MU its utility function depends on the average SINR and the average rate, respectively, while the time overhead is considered as a penalty term. A distributed algorithm is proposed to reach the evolutionary equilibrium, where the existence and stability of the equilibrium is theoretically analyzed and proved. Moreover, we extend the formulation by considering information delay exchange and evaluate its impact on the convergence of the proposed algorithm. Our results reveal that the proposed technique can offer better spectral efficiency and connectivity in heterogeneous sub-$6$ GHz/mmWave cellular networks with mobility, compared with the conventional access mode selection techniques.
\end{abstract}

\begin{IEEEkeywords}
Heterogeneous networks, millimeter-wave, mobility, stochastic geometry, evolutionary game theory.
\end{IEEEkeywords}

\section{Introduction}
\IEEEPARstart{F}{uture} wireless networks, namely beyond fifth generation (B5G) and sixth generation (6G), are required to handle and accommodate a diverse set of {both static and mobile} end-user devices (e.g., remote sensors, unmanned aerial vehicles and autonomous cars), {designating the support of mobility as a fundamental aspect of wireless networks. Moreover, this unprecedented number of connected devices with such diverse requirements is also} contributing to the tremendous growing demand for network scalability, latency, and spectral efficiency (SE) \cite{AKY,ZHAN}. In order to meet this explosive throughput demand of future wireless connectivity, there has been an increasing interest in the synergy of network densification technique by using small cells (SCells) and the millimeter-wave (mmWave) communications \cite{AKY,ZHAN}. {Initially, the concept of network densification refers to the massive deployment of SCells (such as femptocells and picocells) by overlaying the conventional sub-$6$ GHz networks. Such heterogeneous network (HetNet) architectures can provide high throughput to the static users, but may significantly deteriorate the performance of mobile (i.e., moving) users (MUs). Indeed, the higher number of randomly deployed cells leads a MU to experience an increased number of handovers at cell boundaries, thereby resulting in potentially significant signaling overhead among the base stations (BSs) and MUs, compromising the HetNets performance [3]. \cite{XU}.}

In order to further enhance the network throughput, mmWave SCells have been considered as a promising technology for both the current and the future wireless networks, owing to the abundant spectrum resources in the mmWave band that can lead to multi-Gbps rates \cite{WAN}.  {Nevertheless, in comparison with the current sub-$6$ GHz communications, communications at mmWave frequencies are challenging since the channel suffers from severe path loss, atmospheric absorption, and environmental obstructions \cite{AND}. Fortunately, the short wavelength of the mmWaves signals allows the deployment of massive antenna arrays at transceivers to enhance the array gain and combat the higher propagation losses of the mmWave signals. However,} the highly directional mmWave communications lead to a more frequent service interruption between a BS and a MU, due to the beam switching and the beam misalignment events, which degrade the network performance \cite{KAL}. {The above-mentioned problem is further intensified in scenarios with mobility, since even a slight beam misalignment or environmental changes, such as link blockage, device rotation, etc., can cause considerable signal drop \cite{OZK,LIU2}.} Therefore, efficient mobility and handover management is an inherent challenge that needs to be addressed in heterogeneous sub-$6$ GHz/mmWave cellular networks.

In a heterogeneous sub-$6$ GHz/mmWave cellular network, it is critical for a MU to select a proper access mode i.e., whether to communicate with a sub-$6$ GHz or a mmWave BS. On the one hand, by associating with a legacy sub-$6$ GHz macro-cell (MCell), a MU faces a significantly decreased amount of handover processes, resulting to a robust and continuous network connectivity, but at the cost of a reduced spectral efficiency. On the other hand, the enormous spectral efficiency achieved by the association of a MU with a mmWave SCell comes with a cost of frequently interrupted network connectivity, inducing severe time overhead that may jeopardize the network performance. Hence, the overall balance of the counter-posed effects introduced by the mobility of MUs on large-scale sub-$6$ GHz/mmWave cellular networks need to be addressed.

\textit{Related Works:} A promising solution to improve the robustness of future wireless networks, is that mmWave BSs will be overlaid on conventional sub-$6$ GHz networks, where the sub-$6$ GHz BSs provide universal coverage, while the mmWave BSs provide high data rates in their range. Several research efforts have been carried out to evaluate the performance of sub-$6$ GHz/mmWave networks in the context of large-scale HetNets. In \cite{ELS}, the authors studied a HetNet consisting of sub-$6$ GHz MCells and mmWave SCells, where both the signal-to-interference-plus-noise ratio (SINR) and the rate coverage probability were evaluated, under various cell association schemes. Following a similar thought, in \cite{GHA} a biasing based strategy has been proposed for load balancing across a multi-band (i.e., sub-$6$ GHz/mmWave) HetNet. In \cite{SHI2}, the authors investigated the effect of the downlink/uplink decoupled association scheme in the context of heterogeneous sub-$6$ GHz/mmwave cellular networks, and analytical expressions for the rate, outage probability, and area throughput were derived. By considering different propagation characteristics of different mmWave frequency bands, the joint user association and resource allocation problem of a multi-band HetNet has been investigated in \cite{LIU}. In \cite{ZHA2}, the energy efficiency of multi-band HetNets with energy harvesting design has been investigated through an iterative gradient joint user association and power allocation algorithm. Aiming the ubiquitous connectivity in such networks, the authors in \cite{SKO} proposed a hybrid BS cooperation scheme, and the meta-distribution of the SINR was evaluated. 

The above-mentioned studies focus on the network performance for only the static case without considering the mobility scenarios. Although the integration of mmWave SCells into the existing sub-$6$ GHz networks, realized via heterogeneity and BS densification, tends to meet the desired spectral efficiency, the capacity gains are achieved at the expense of increased handover rates, and hence, higher service delays. The concept of handover in such networks is studied in \cite{POL}, where the authors proposed a dual connectivity (DC) framework that enables MU devices to communicate with both sub-$6$ GHz and mmWave BSs simultaneously. In \cite{KIB}, the authors have shown that the performance of a MU improves by employing the DC strategy when comparing with the traditional single connectivity (SC) in terms of the coverage probability. The impact of MUs' mobility on user-centric mmWave communications with multi-connectivity is evaluated in \cite{CHO}, where a state machine is developed to describe the mobility of the MUs. In \cite{MON}, the authors analyzed how to improve the quality-of-service in Long-Term Evolution-New Radio multi-band scenarios by comparing different channel measurement metrics. In \cite{ZHA}, each MU is allowed to have DC and the particle swarm optimization is adopted to maximize the average successful delivery probability. Nevertheless, the strict synchronization requirement for the DC approach, escalates the hardware complexity and information exchange delay \cite{POL}. The concept of SC, on the other hand, is a promising low-complexity and low-latency alternative approach that has been overlooked.

The proper access mode selection of a MU that employs the SC approach i.e., whether to communicate with a sub-$6$ GHz or a mmWave BS, in the context of heterogeneous sub-$6$ GHz/mmWave cellular networks is of paramount importance. Conventional approaches study the problem of MUs' access mode selection based on either the received SINR or the Euclidean distance from the serving BS \cite{YAN1}, which may not be suitable for networks with mobility. Evolutionary game theory (EGT) is an alternative and suitable approach to address the aforementioned problem. Specifically, EGT models the decision-making process of a population (i.e., group) of players, where each player evolves over time by gradually adjusting its action so that the payoff is maximized. In \cite{NIY}, the authors studied the problem of network selection in HetNets, and two algorithms were proposed based on population evolution and reinforcement-learning to solve the investigated problem. The authors in \cite{YAN2} proposed a dynamic network selection algorithm based on EGT for a fog-radio access network, and showed that the proposed algorithm has a better payoff than the conventional max rate-based scheme. In \cite{SEM}, the authors used a stochastic geometry (SG)-based approach to analyse the equilibrium's stability of the considered evolutionary game in two-tier HetNets, proposing an enhanced proportional fairness scheduler to improve the efficiency of resource utilisation. However, the above studies only focus on the achieved performance of a static user, neglecting users' mobility within the network.

\textit{Contributions:} Motivated by the above, in this paper, we study the dynamics of MU access mode selection for heterogeneous sub-$6$ GHz/mmWave cellular networks, consisting of sub-$6$ GHz and mmWave BSs. The main contributions of this paper can be summarized as follows.
\begin{itemize}
	\item We propose an analytical framework based on SG, which comprises the co-design of HetNets and mmWave SCells. The developed framework takes into account the ability of MUs to operate either at the sub-$6$ GHz or the mmWave frequency bands. Based on the developed framework, the heterogeneous sub-$6$ GHz/mmWave cellular network performance is assessed in terms of the attainable average SINR, the achievable average rate, and the required mobility-induced time overhead. The accuracy of the analytical results is validated by numerical studies.
	\item Based on the results obtained from the SG analysis, an EGT-based algorithm is proposed to solve the problem of MU access mode selection in the considered network deployments with respect to the mobility of the MUs, in order to achieve a reliable network connectivity. More specifically, we model the strategy adaptation process of the MUs by the replicator dynamics and the associated evolutionary equilibrium stability of the formulated evolutionary game has been analyzed. We finally extend the above formulation by considering information exchange delay and study its impact on the convergence of the proposed algorithm.
	\item Analytical expressions for the average achievable SINR, the average achievable rate, and the required mobility-induced time overhead are derived under the different MU access mode selections. Moreover, under specific practical assumptions, closed-form expressions for the Laplace transform of the received interference are derived. These closed-form expressions provide a quick and convenient methodology of evaluating the system's evolutionary equilibrium and obtaining insights into how key parameters affect the performance. {Our results reveal that the proposed EGT-based algorithm is an efficient tool to ensure reliable connectivity. Finally, compared to the conventional max rate-based access mode selection technique, the proposed technique significantly enhances the achieved average spectral efficiency of the considered deployment.}
\end{itemize}

\textit{Paper Organization:} Section \ref{SystemModel} introduces the network model together with the channel, blockage, mobility, and sectorized antenna models. The access mode selection strategy of MUs is formulated as an evolutionary game in Section \ref{EGT}. Section \ref{SG} provides the SG-based analysis of the average SINR, the average achieved rate, and the mobility-induced time overhead used in the utility functions of the game formulation. The evolutionary equilibrium and analysis of the stability of the equilibrium are presented in Section \ref{Equilibrium}. Simulation results are presented in Section \ref{Numerical}, followed by our conclusions in Section \ref{Conclusion}.

\section{System model}\label{SystemModel}
In this section, we provide details of the considered system model; the main mathematical notation related to the system model is summarized in Table \ref{Table1}.

\begin{table*}[t]\centering
	\caption{Summary of Notations.}\label{Table1}
	\scalebox{0.85}{
		\begin{tabular}{| l | l || l | l |}\hline
			\textbf{Notation} & \textbf{Description} & \textbf{Notation} & \textbf{Description}\\\hline
			$K$ & Total number of network tiers & $P_k$ & Transmit power of BSs in the $k$-th tier \\\hline
			$\Phi_k,\lambda_k$ & PPP of BSs in $k$-th tier of density $\lambda_k$ & $L(X,Y),a$ & Path-loss model and exponent \\\hline
			$\Phi_u,\lambda$ & PPP of MUs of density $\lambda$ & $\mu_k$ & Nakagami parameter for the $k$-th tier\\\hline
			$f_V(v)$ & MUs' velocity pdf & $\sigma^2$& White Gaussian noise\\\hline
			$\lambda_b$ & Spatial density of the blockages & $M_k,m_k,\phi_k$ & Sectorised antenna parameters for the $k$-th tier \\\hline
			$f_L(\cdot),f_W(\cdot)$ &  Blockages' lengths and widths pdfs& $\Phi_k^a,\lambda_k^a(r)$ & PPP and density of active BSs in the $k$-th tier \\\hline
			$\mathbb{E}[L], \mathbb{E}[W]$ & Mean length and width of blockages& $\mathcal{A}=\{\alpha_1,\dots,\alpha_K\}$ & Action set of each player \\\hline
			$\beta,p$ & Blockage parameters & $\Omega,\omega_\alpha$ & Total number of MUs and the MUs selecting action $\alpha$ \\\hline
			$p_{\rm L}(r)$ & LoS probability& $\chi_\alpha$ & Population share of action $\alpha$ \\\hline
			$f_k(r)$ & LoS Distance pdf & $w_1,w_2$ & EGT-based game weights \\\hline
			$T_m$ &  Predefined threshold for maximum time overhead& $\mathcal{I}_k,\mathcal{L}_{\mathcal{I}_k}(\cdot)$ & Interference and Laplace transform of the interference function \\\hline
			
	\end{tabular}}\vspace{-3mm}
\end{table*}
\subsection{Network Model}
The network is studied from a macroscopic point-of-view using SG. We consider an  orthogonal frequency division multiple access $K$-tier HetNet composed of a single sub-6 GHz MCell overlaid with $K-1$ mmWave SCells. The BSs belonging to the $k$-th tier, where $k=\{1,\cdots,K\}$, are modeled as a homogeneous Poisson point process (PPP) $\Phi_k=\{x_{i,k}\in\mathbb{R}^2,i\in\mathbb{N}^+\}$ with a spatial density $\lambda_k$ ${\rm BS/km}^2$. Moreover, the initial locations of the MUs follow an arbitrary independent point process $\Phi_u=\{u_i\in\mathbb{R}^2,i\in\mathbb{N}^+\}$ with spatial density $\lambda$ ${\rm MU/km}^2$. {Since multiple MUs can exist in the coverage area of a BS, a round-robin scheduling mechanism is employed, which randomly and without any priority schedules a single MU at each time slot to communicate with the assigned BS.} It is also assumed that the MUs are capable of operating at both sub-$6$ GHz and mmWave frequency bands, but each MU is solely served by a single network tier\footnote{{Although the cooperative communication of a MU with BSs from multiple tiers can enhance the network performance, the strict synchronization requirement designates such techniques as unsuitable for scenarios with strict time-delay constraints.}}. Fig. \ref{TopologyFig} illustrates a realization of a two-tier sub-$6$ GHz/mmWave cellular network i.e, $K = 2$, consisting of sub-$6$ GHz macro-cell and mmWave SCell BSs, where all MUs can be either static or mobile. {Without loss of generality and by following Slivnyak's theorem \cite{HAEb}, we execute the analysis for the typical MU, which is initially located at the origin, but the results hold for all MUs of the network.}

\begin{figure}[!t]
	\centering
	\includegraphics[width=0.6\linewidth]{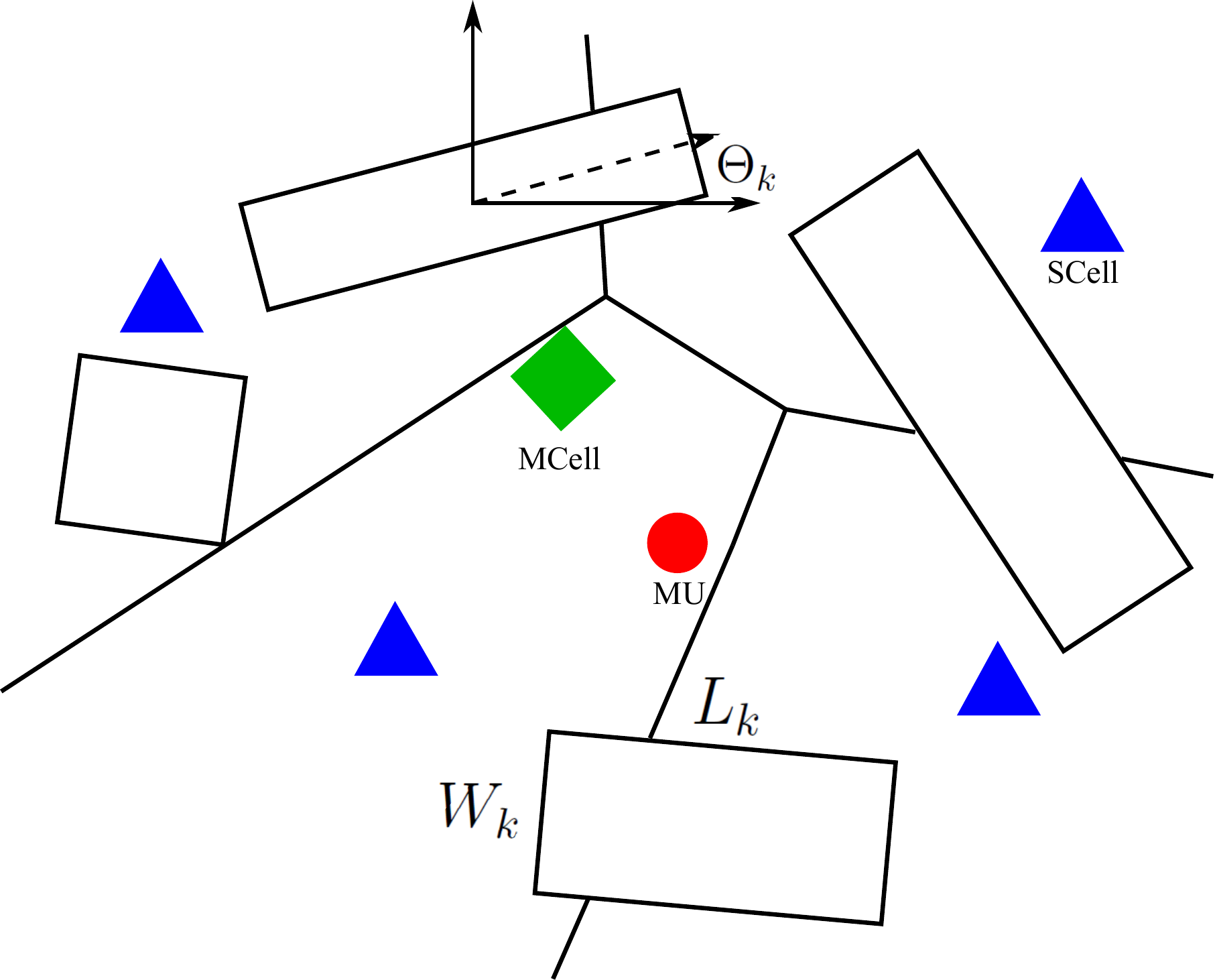}
	\caption{Network topology of a two-tier sub-$6$ GHz/mmWave cellular network, which consists of a sub-$6$ GHz macrocell (MCell) tier and a mmWave SCell tier, where all MUs can be either static or mobile. Blockages are modeled as a random process of rectangles, where the lengths $L_k$ and widths $W_k$ of the rectangles are assumed to be i.i.d. distributed, and the orientation $\Theta_k$ is assumed to be uniformly distributed in $(0, 2\pi]$.}
	\label{TopologyFig}\vspace{-0.5cm}
\end{figure}

We consider that the mobility of a MU within the network area is according to the random waypoint (RWP) model which can be described as follows. {At the beginning, all MUs are initially placed at uniformly random locations drawn from a uniform distribution $\Phi_u\in\mathbb{R}^2$. Thereafter, each MU moves with constant velocity towards a selected destination point (also known as waypoint) within the network area. In particular, motivated by the 3GPP mobility model \cite{POL}, we assume that each MU moves along a straight line with uniformly random speed $v$ (i.e., $v\sim\mathcal{U}(v_{\rm min},v_{\rm max})$) and direction $\phi$ (i.e., $\phi\sim\mathcal{U}(0,2\pi)$), independently of the other MUs. These simple and mathematically tractable random straight-line mobility models are widely adopted in the literature and can be regarded as benchmarks for evaluating more sophisticated models \cite{BAN}.} It is important to mention here that, each MU moves independently of other MUs, and thus, all nodes have the same stochastic mobility properties.
\subsection{Blockage and channel model}
A mmWave link can be either line-of-sight (LoS) or non-LoS (NLoS), depending on whether the BS is visible to the MU or not. More specifically, a transmitter is considered LoS by a receiver, if and only if their communication link is unobstructed by blockages. We assume that the blockages form a Boolean scheme of rectangles. More specifically, the centers of the rectangles (i.e., blockages) form a homogeneous PPP of density $\lambda_b$, while the lengths $L_k$ and widths $W_k$ of the rectangles are assumed to be i.i.d. distributed according to some pdf $f_L(x)$ and $f_W(x)$, respectively. The orientation of the rectangles i.e., $\Theta_k$, is assumed to be uniformly distributed in $(0, 2\pi]$. Blockage deployment under the aforementioned approach can be depicted in Fig. \ref{TopologyFig}. Under the adopted blockage model, the LoS probability function $p_{\rm L}(r)$, which depicts the probability that a link of length $r$ is LoS, is equal to $p_{\rm L}(r) = \exp(-\beta r-p)$, where $\beta=2\lambda_b(\mathbb{E}[L]+\mathbb{E}[W])/\pi$ and $p=\mathbb{E}[L]\mathbb{E}[W]$ \cite{AND}. {In this paper, the effect of the NLoS signals is ignored, and we focus on the analysis where the typical MU is only associated with a LoS transmitter and the interference stems from only LoS interferers \cite{AND}.}

We assume that all BSs that belong to the $k$-tier transmit with the same power $P_k$ (dBm), where $P_i>P_j$ if $i<j$. All channels in the network are assumed to experience both large-scale path-loss effects and small-scale fading. In particular, we model the large-scale path-loss between the transmitter located at $X$ and the receiver located at $Y$ by an unbounded singular path-loss model, $L(X,Y)=\|X-Y\|^a$, where $a > 2$ denotes the path-loss exponent. As for the small-scale fading, we assume independent Nakagami-$\mu$ fading, where different links are assumed to be independent and identically distributed. This is motivated by the fact that the Nakagami-$\mu$ distribution is a generalized distribution which can model different fading environments, where Rayleigh and Rician distributions are its special cases \cite{AND}. Let $g_{i,k}$ be the small scale fading of the link between the $i$-th BS from the $k$-th tier and the typical MU. Hence, under the Nakagami fading assumption, the power of the channel fading $g_{i,k}$ is a normalized Gamma random variable with shape parameter $\mu_k$ and scale parameter $1/\mu_k$ i.e., $|g_{i,k}|^2=h_{i,k}\sim\Gamma\left(\mu_k,1/\mu_k\right)$. Furthermore, we assume all wireless links exhibit additive white Gaussian noise with zero mean and variance $\sigma^2$. {Finally, due to the high directionality of mmWave signals, the impact of the ground reflections (or other types of reflections) on the achieved network performance is ignored.}

\subsection{Sectorized antenna model}
In such multi-band deployments, sub-$6$ GHz BSs aspire to the ubiquitous coverage of the MUs, while mmWave BSs mainly focus on providing high capacity to individual MUs. Motivated by this, we assume the employment of omni-directional antennas for all sub-$6$ GHz BSs, while all mmWave BSs are equipped with directional antennas. Furthermore, we assume that the MUs are equipped with an omni-directional antenna. Regarding the modeling the antenna directionality of the mmWave BSs, we adopt a sectorized antenna model that approximates the actual beam pattern with sufficient accuracy \cite{AND}. Specifically, the antenna array gain of the BSs that belongs in the $k$-th tier can be characterized by three values: $1$) the main-lobe beamwidth $\phi_k\in[0,2\pi]$, $2$) the main-lobe gain $M_k$ (dB), and $3$) the side-lobe gain $m_k$ (dB), where $M_k > m_k$. For simplicity, each BS that belongs in the $k$-th tier is assumed to have a codebook of $2^{n_k}$ possible beamforming vectors with $n_k\in\mathbb{N}$, where the patterns of these beamforming vectors have non-overlapping main-lobes designed to cover the full angular range i.e., $\phi_{k} = 2\pi/2^{n_k}=\pi/2^{n_k-1}$ \cite{KAL}. We assume that a perfect beam alignment can be achieved between each MU and its serving BS by using the estimated angles of arrival, resulting in an antenna array gain of $M_k$, denoted by $G_{0,k}$. Since the beams of all BSs are oriented towards their associated MUs, the direction of arrivals between interfering BSs and the typical MU is distributed uniformly in $[-\pi,\pi]$. Therefore, the antenna gain, $\mathcal{M}_k$, between the typical MU and a BS at $x\in\Phi_k$ is an i.i.d. discrete random variable described by
\begin{equation*}
\mathcal{M}_k = \begin{cases}
M_k, & \text{with probability } p_{M_k}= \frac{\phi_k}{2\pi}\\
m_k, & \text{with probability } p_{m_k}=  1-p_{M_k}.
\end{cases}
\end{equation*}
Note that, since all sub-$6$ GHz BSs (i.e., $k=1$) are equipped with omni-directional antennas, the antenna gain between a MU and a sub-$6$ GHz BS is equal to $\mathcal{M}_1=0$ dB. 

\section{Evolutionary Game Approach for MU Access Mode Selection}\label{EGT}
In this section, the evolutionary game formulation for the MU access mode selection is formulated, and the replicator dynamics is exploited in order to model the strategy adaptation process. Finally, the formulation is further extended by considering information exchange delay and its impact on the convergence of the developed algorithm is investigated.
\subsection{Game formulation}
The adaptive access mode selection among the network tiers of the HetNet deployment can be formulated as an evolutionary game as follows:
\begin{itemize}
	\item \textbf{Set of players:} In the considered MUs' access mode selection game, the set of MUs $\Phi_u$ denotes the set of players.
	\item \textbf{Set of actions:} The MUs (i.e., the players) are interested in selecting a suitable access mode. According to the system model, each player has $K$ access mode strategies. Accordingly, we define the action set for each player as $\mathcal{A}=\{\alpha_1,\cdots,\alpha_K\}$, which includes all possible access mode strategies.
	\item \textbf{Population:} In the context of evolutionary game, the set of players also constitutes the population. Let $\Omega$ and $\omega_\alpha$ denote the total number of MUs and the number of MUs selecting action $\alpha\in\mathcal{A}$, respectively. Then, the population share of action $\alpha$, is given by $\chi_\alpha=\frac{\omega_\alpha}{\Omega}$. It is important to mention here that the population share, $\chi_\alpha$, indicates the fraction of BSs within the network tier $\alpha$ that are active (e.g., the fraction of BSs that are selected for serving a set of MUs).
	\item \textbf{Payoff function:} The payoff function quantifies the performance satisfaction of MUs, for which two components are taken into account. The first component depicts the utility function which is associated with the SINR observed by a player when a certain access mode is selected. The second component is a penalty (or reward) term depending on whether the required time overhead incurred by the handover processes of the MUs exceeds the pre-defined threshold $T_{\rm m}$ or not. Specifically, the payoff of a MU with a velocity $u$ and an access mode strategy $\alpha$, is defined as
	\begin{equation}\label{Payoff}
	\pi_{\alpha}=w_1\mathcal{U}\left({\rm SINR}_\alpha\right)-w_2\left(T_{\rm HO}(\alpha,u)-T_{\rm m} \right),
	\end{equation}
	where $w_1$ and $w_2$ are biasing factors which can be determined according to which network tier (i.e., sub-$6$ GHz or mmWave) should be given priority in the MUs' allocation, and $\mathcal{U}\left({\rm SINR}_\alpha\right)$ denotes the utility function measuring the achieved performance. The penalty (or reward) term, $T_{\rm HO}(\alpha,u)$ is the required time overhead incurred by the handover processes of the MUs (see Section \ref{SG}), and $T_{\rm m}$ is a predefined threshold for the maximum time overhead required for the handover processes. A MU may receive a penalty or reward depending on whether the time overhead constraint is violated or not. The penalty or reward is modeled in the payoff function which will be explained in Section \ref{EGT}.
\end{itemize}
{As for the utility function of the investigated EGT, two well-investigated functions are considered, namely the average SINR and the average achievable rate, leading to two low-complexity MU access mode selection strategies.} Specifically, the considered utility functions are mathematically expressed as follows
\begin{equation}
\mathcal{U}^{(1)}({\rm SINR}_\alpha)=\mathbb{E}[{\rm SINR}_\alpha],
\end{equation}
and
\begin{equation}
\mathcal{U}^{(2)}({\rm SINR}_\alpha)=\mathbb{E}[\mathcal{R}_\alpha],
\end{equation}
where $\mathcal{R}_\alpha$ represents the achievable rate of a MU that selects the access mode strategy $\alpha$ i.e., $\mathcal{R}_\alpha=B_\alpha\log_2(1+{\rm SINR}_\alpha)$ (bps), and $B_\alpha$ depicts the allocated bandwidth of the channel for the $\alpha$-th tier. Then, for a MU selecting action $\alpha$, the corresponding payoff functions for each of the above utilities, $\pi_\alpha^{(1)}$ and $\pi_\alpha^{(2)}$, can be written as follows
\begin{equation}\label{Payoff1}
\pi_{\alpha}^{(1)}=w_1\mathbb{E}[{\rm SINR}_\alpha]-w_2\left(T_{\rm HO}(\alpha,u)-T_{\rm m} \right),
\end{equation} 
and
\begin{equation}\label{Payoff2}
\pi_{\alpha}^{(2)}=w_1\mathbb{E}[\mathcal{R}_\alpha]-w_2\left(T_{\rm HO}(\alpha,u)-T_{\rm m} \right).
\end{equation} 
Based on the above-mentioned definitions, two evolutionary games are defined, namely $\mathcal{G}^{(1)}$ and $\mathcal{G}^{(2)}$. Henceforth, the superscripts $``(1)"$ and $``(2)"$ in the payoff, the utility, and the population share depict the corresponding game. 
\subsection{Replicator dynamics}
EGT combines game theory with a dynamic evolutionary process, focusing on the dynamics of the strategy adaptation in the population, as opposed to traditional game theory, which emphasizes on the static equilibrium \cite{SEM}. According to the EGT, all players constantly modifying their behaviors, strategies, etc., and the successful strategies are adopted to obtain a better payoff. This is referred to as the evolution of the game $\mathcal{G}^{(\kappa)}$, where $\kappa\in\{1,2\}$, during which the strategy adaptation of the MUs will change the population share, $\chi_\alpha^{(\kappa)}$, over the time. Hence, the population share is a function of time $t$ which can be denoted as $\chi_\alpha^{(\kappa)}(t)$. In order to mathematically model and analyze the strategy adaptation process of the players in an evolutionary game, a set of ordinary differential equations is adopted, also known as replicator dynamics.

In the context of EGT, the evolutionary equilibrium is defined as the fixed points of the replicator dynamics; the acquisition of the evolutionary equilibrium of the game is equivalent to solving the set of ordinary differential equations given by the replicator dynamics. The evolutionary equilibrium is achieved through population evolution. For this purpose, all players initially adopt a randomly selected strategy $\alpha$. As the game is repeated, each player compares its own payoff with the average payoff of the entire population. If the obtained payoff is less than the average, in the next period, the player randomly selects another strategy. According to the replicator dynamics, the number of MUs selecting the access mode $\alpha$ will increase if the corresponding payoff is higher than the average (i.e., $\overline{\pi}^{(\kappa)}(t)>\pi_\alpha^{(\kappa)}(t)$), which is defined as follows
\begin{equation}
\dot{\chi}_\alpha^{(\kappa)}(t) = \varrho \chi_\alpha^{(\kappa)}(t)\left(\pi_\alpha^{(\kappa)}(t)-\overline{\pi}^{(\kappa)}(t)\right)\ \forall \alpha\in\mathcal{A},
\end{equation}
where $\kappa\in\{1,2\}$, $\varrho>0$ represents the parameter that control the speed of the MUs in observing and adjusting their access mode selection, and $\overline{\pi}_\alpha(t)$ is the average payoff of all MUs, which can be computed from
\begin{equation}
\overline{\pi}^{(\kappa)}(t)=\sum\nolimits_{\alpha\in\mathcal{A}}\pi_\alpha^{(\kappa)}(t) \chi_\alpha^{(\kappa)}(t).
\end{equation}
Based on the aforementioned discussion on the replicator dynamics of the evolutionary game $\mathcal{G}^{(\kappa)}$, where $\kappa\in\{1,2\}$, the population evolution can be described as Algorithm \ref{Algorithm}.

\begin{algorithm}[t!]
	\caption{Evolution algorithm for game $\mathcal{G}^{(\kappa)}$.}
	\begin{algorithmic}[1]\label{Algorithm}
		\small\STATE \textbf{Initialize}  The MU randomly selects an access mode.\\
		\STATE \textbf{Step 1} Each MU communicates with the selected access mode and observes the received payoff, which is calculated based on \eqref{Payoff}. The payoff and the selected access mode information are then sent to a central controller.\\
		\STATE \textbf{Step 2} The central controller calculates both the average payoff of the population, $\overline{\pi}^{(\kappa)}(t)$, and the population state, $\chi_\alpha^{(\kappa)}(t)$, and broadcasts it to all MUs.\\
		\STATE \textbf{Step 3} Each MU compares its own payoff with the average payoff of the population. For a MU selecting access mode $k$, if its payoff is less than the average payoff it would randomly switch to another access mode $j$, where $\{k,j\}=\{1\dots,K\}$, $j\neq k$, and $\pi_{\alpha_j}^{(\kappa)}(t)>\pi_{\alpha_k}^{(\kappa)}(t)$.\\
		\STATE Repeat from Step 1 to Step 3 until convergence.
	\end{algorithmic}
\end{algorithm}

\subsection{Delay in replicator dynamics}
As mentioned before, each MU calculates and shares its potential payoff corresponding to a certain strategy to the centralized controller. However, the latency observed in practical communication links leads to a delayed information exchange, affecting the evolution process of the population. Hence, the MUs may not make use of the latest population information when making their association decisions. For instance, a MU's access decision at time $t$ may be determined by the population shares information at time $t-\tau$ (time gap for $\tau$ units of time). Consequently, the delayed replicator dynamics for the access mode strategy selection can be modified as follows
\begin{equation*}
\dot{\chi}_\alpha^{(\kappa)}(t) = \chi_\alpha^{(\kappa)}(t-\tau)\left(\pi_\alpha^{(\kappa)}(t-\tau)-\overline{\pi}^{(\kappa)}(t)\right),
\end{equation*}
where $\kappa\in\{1,2\}$. By leveraging the delayed replicator dynamics, the impact of information exchange delays on the convergence of the strategy adaptation process can be evaluated. To quantify the player's payoff functions in the considered game formulation, in the following section, we use SG tools to obtain the average SINR, the average achievable rate and the mobility-induced time overhead for each access mode selection.
\section{Heterogeneous sub-$6$ GHz/mmWave Mobile Networks:\\ A Macroscopic point-of-view}\label{SG}
In this section, we analyze the players' payoff function in the considered game formulation presented above (in Section \ref{EGT}), in the context of SG. We first evaluate the statistic properties of the aggregate interference, where analytical and closed-form expressions for the Laplace transform of the interference function are derived. Thereafter, we examine the players' utility functions (i.e., $\mathcal{U}^{(1)}(\cdot)$ and $\mathcal{U}^{(2)}(\cdot)$) depending on the selected access mode strategy $\alpha$. Finally, by analyzing the time overhead required to complete the handover procedures that incurred by the MUs' mobility, the penalty/reward term of the players is assessed.
\subsection{Interference characterization}
Firstly, we investigate the received interference at the typical MU, where analytical and asymptotic expressions for the Laplace transform of the received interference are derived. The aggregate interference caused by the active BSs that belong in the $k$-th tier is denoted as $\mathcal{I}_k$. Specifically, the aggregate interference observed by the typical MU that is served by the BS at $x_{0,k}\in\Phi_k$ can be expressed as follows
\begin{equation}\label{Interference}
\mathcal{I}_k = \sum\nolimits_{x_{i,k}\in\Phi^{\rm a}_k\backslash x_{0,k}} \mathcal{M}_k P_k h_{i,k} \|x_{i,k}\|^{-a},
\end{equation}
where $\mathcal{M}_k = \{M_k,m_k\}$, $h_{i,k}$ is the channel fading of the typical MU with the BS at $x_{i,k}\in\Phi^{\rm a}_k\backslash x_{0,k}$, and $\Phi^{\rm a}_k$ is the point process that represents the active LoS BSs that belong in the $k$-th tier. The set of active LoS BSs i.e., $\Phi_k^{\rm a}$, can be modeled by a non-homogeneous PPP with density $\lambda_k^{\rm a}(r)$. Specifically, the intensity of the active LoS BSs from the $k$-th tier is given by 
\begin{equation}\label{ActiveInterfering}
	\lambda_k^{\rm a}(r) = \lambda_k\chi_kp_{\rm L}(r),
\end{equation}
where $\lambda_k$ is the intensity of the BSs that belong in the $k$-th tier, $p_{\rm L}(r)$ is the probability of a BS at distance $r$ to be LoS with the typical MU i.e., $p_{\rm L}(r) = \exp(-\beta r-p)$, and $\chi_k$ is the proportion of the BSs from the $k$-th tier that serve a MU. {It is important to mention here that, even if a MU is unable to communicate with any SCell mmWave BS (e.g. sparse network deployments), macro-cell sub-$6$ GHz BSs can still provide coverage to that particular MU.} Then, the cumulative distribution function (cdf) of the distance $R$ to the closest active LoS BS from the $k$-th tier is $\mathbb{P}[R>r]=\exp(-2\pi\lambda_k\chi_kU(r))$ \cite{AND}, and the pdf of the distance $R$, is given by
\begin{equation}\label{pdf}
	f_k(r)=2\pi\lambda_k\chi_kr\exp(-\beta r-2\pi\lambda_k\chi_kU(r)),
\end{equation}
where $U(r) = \frac{\exp(-p)}{\beta^2}\left(1-\exp(-\beta r)(1+\beta r)\right)$. Note that, for the sub-$6$ GHz MCell i.e., $k=1$, the density of the active LoS BSs is equal to $\lambda_1^{\rm L}(r) = \lambda_1\chi_1$. 
To facilitate the characterization of the aggregate interference, in the following Lemma, we compute the Laplace transform of the random variable $\mathcal{I}_k$ evaluated at $s$.
\begin{lemma}\label{LemmaLaplace}
	The Laplace transform of the aggregate interference function, $\mathcal{I}_k$, is given by
	\begin{align}
	\mathcal{L}_{\mathcal{I}_k}(s) &=\prod\nolimits_{\mathcal{M}_k}\exp\Bigg(-2\pi\lambda_k\chi_kp_{\mathcal{M}_k}\Omega_k(s)^2\nonumber\\&\quad\times\int_{\frac{r_0}{\Omega_k(s)}}^\infty\frac{z^{a+1}}{1+z^a}\exp\left(-\beta z\Omega_k(s)-p\right){\rm d}z\Bigg),
	\end{align}
	where $\Omega_k(s) = (s\mathcal{M}_kP_k)^\frac{1}{a}$ and $\mathcal{M}_k=\{M_k,m_k\}$.
\end{lemma}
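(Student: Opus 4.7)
The plan is to evaluate $\mathcal{L}_{\mathcal{I}_k}(s)=\mathbb{E}[\exp(-s\mathcal{I}_k)]$ directly via the probability generating functional (PGFL) of a PPP, after conditioning on the typical MU being served at distance $r_0=\|x_{0,k}\|$. By independence of the Nakagami fades $h_{i,k}$ and the antenna gains $\mathcal{M}_k$ across interferers, and their joint independence of the point process, the Laplace transform of the sum in \eqref{Interference} factorizes into a product over the points of $\Phi_k^{\rm a}\setminus\{x_{0,k}\}$ once the fading and antenna-gain expectations are pushed inside via Campbell--Mecke.

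My first step would be to reduce the random antenna gain to an independent-thinning argument: mark each interferer with $M_k$ (probability $p_{M_k}$) or $m_k$ (probability $p_{m_k}$), so that $\Phi_k^{\rm a}$ splits into two independent inhomogeneous PPPs of densities $p_{M_k}\lambda_k^{\rm a}(r)$ and $p_{m_k}\lambda_k^{\rm a}(r)$ given by \eqref{ActiveInterfering}. Since the Laplace transform of a sum of contributions from independent PPPs factorizes, this already produces the outer product $\prod_{\mathcal{M}_k\in\{M_k,m_k\}}$ in the statement. Taking the expectation over $h_{i,k}$ per-point inside the PGFL and converting the planar integral to polar coordinates on $\mathbb{R}^2\setminus B(0,r_0)$ yields, for each thinning,
\[
\exp\!\Bigl(-2\pi p_{\mathcal{M}_k}\lambda_k\chi_k\!\int_{r_0}^{\infty}\!\bigl(1-\mathbb{E}_h[e^{-s\mathcal{M}_kP_khr^{-a}}]\bigr)\,e^{-\beta r-p}\,r\,dr\Bigr),
\]
where I have used $p_{\rm L}(r)=\exp(-\beta r-p)$ from the Boolean blockage model.

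Finally, I would apply the change of variables $z=r/\Omega_k(s)$ with $\Omega_k(s)=(s\mathcal{M}_kP_k)^{1/a}$. This substitution rescales the fading/path-loss factor into a rational function in $z$ (after plugging in the Nakagami Laplace transform), shifts the lower limit of the integral to $r_0/\Omega_k(s)$, transforms the LoS kernel into $\exp(-\beta z\Omega_k(s)-p)$, and pulls a Jacobian factor of $\Omega_k(s)^2$ outside the integral, matching the form stated in the lemma; combining both $\mathcal{M}_k$-thinnings then reproduces the product in the claim. The main obstacle I anticipate is pure bookkeeping rather than any deep probabilistic step: correctly threading the antenna-gain thinning through the PGFL so that $p_{\mathcal{M}_k}$ multiplies the intensity rather than appearing inside the fading expectation, and preserving the exclusion ball $B(0,r_0)$ around the serving BS consistently through the rescaling so that the correct lower limit $r_0/\Omega_k(s)$ emerges intact.
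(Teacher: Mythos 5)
Your proposal is correct and follows essentially the same route as the paper's proof: the fading expectation is taken per point inside the PGFL of the inhomogeneous LoS PPP with intensity $\lambda_k\chi_k p_{\rm L}(r)$, the random antenna gain produces the product over $\mathcal{M}_k\in\{M_k,m_k\}$ weighted by $p_{\mathcal{M}_k}$, and the change of variable $z=r/\Omega_k(s)$ gives the lower limit $r_0/\Omega_k(s)$, the kernel $\exp(-\beta z\Omega_k(s)-p)$, and the factor $\Omega_k(s)^2$. The only minor differences are that you justify the $\mathcal{M}_k$-product via explicit independent thinning (the paper simply writes it), and the paper evaluates the per-point fading expectation with the exponential moment generating function $\left(1+s\mathcal{M}_kP_k r^{-a}\right)^{-1}$, i.e., effectively Rayleigh interfering links, whereas you keep the Nakagami Laplace transform general.
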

\begin{proof}
	See Appendix \ref{Appendix2}.	
\end{proof}
Recall that $p_{\rm L}(r) = \exp\left[-\beta r-p\right]$ depicts the LoS probability function defined in Section \ref{SystemModel}, and captures the effect of building blockages. Although the expression in Lemma \ref{LemmaLaplace} can be evaluated by using numerical tools, this could be difficult due to the presence of multiple integrals. To address this, we further simplify the analysis by considering a step function for the blockage probability i.e., $p_{\rm L}(r)=\mathds{1}_{r<R_B}$ where $R_B$ is the maximum length of an LoS link (i.e., $R_B\approx \sqrt{2\exp[-p]}/\beta$ \cite{AND}). The validity of the aforementioned assumption will be shown in the numerical results. The Laplace transform of the aggregate interference function for $a = 4$, is evaluated at the following Remark.
\begin{remark}\label{Remark1}
	The Laplace transform of the aggregate interference function from the $k$-th tier, $\mathcal{I}_k$, with $p_{\rm L}(r)=\mathds{1}_{r<R_B}$ and $a=4$, is approximately equal
	{\small \begin{equation*}
	\widetilde{\mathcal{L}}_{\mathcal{I}_k}\!(s)\!\approx\!\prod\nolimits_{\mathcal{M}_k}\!\!\!\!\exp\!\left(\!-\pi\lambda_k\chi_kp_{\mathcal{M}_k}\Omega_k^2(s)\arctan\!\!\left[\!\frac{(R^2\!+\!r^2)\Omega_k^2(s)}{1-(Rr)^2}\right]\right),
	\end{equation*}}
	where $\Omega_k(s)=(s\mathcal{M}_kP_k)^\frac{1}{a}$.
\end{remark}
\begin{proof}
	By using \cite[3.194.1]{GRA}, the expression in Lemma \ref{LemmaLaplace} for the case with $p_{\rm L}(r)=\mathds{1}_{r<R_B}$ and $a=4$, can be simplified as follows
	\begin{align*}
	&\widetilde{\mathcal{L}}_{\mathcal{I}_k}(s)=\prod\nolimits_{G_k}\exp\Bigg(-\pi\lambda_k\chi_kp_{\mathcal{M}_k}\Omega_k^2(s)\\&\quad\times\left(\arctan\left[\left(\frac{R}{\Omega_k(s)}\right)^2\right]-\arctan\left[\left(\frac{r}{\Omega_k(s)}\right)^2\right]\right)\Bigg).
	\end{align*}
	Then, based on the identity $\arctan(x)-\arctan(y)=\arctan\left(\frac{x+y}{1-xy}\right)$, the final expression can be derived.
\end{proof}

Furthermore, in order to study the existence and the stability of the evolutionary equilibrium in Section \ref{Equilibrium}, it is necessary to derive a closed-form expression for the considered performance metrics. To address this, a tight upper bound for the Laplace transform of the aggregate interference function is evaluated at the following Remark.
\begin{remark}\label{Remark2}
	The Laplace transform of the aggregate interference function from the $k$-th tier, $\mathcal{I}_k$ is upper-bounded as follow
	\begin{equation*}
	\widetilde{\mathcal{L}}_{\mathcal{I}_k}(s)<\prod\nolimits_{\mathcal{M}_k}\exp\left(-\pi\lambda_kp_{\mathcal{M}_k}\chi_k\Omega_k^4(s)\left(\frac{1}{r^2}-\frac{1}{R^2}\right)\right).
	\end{equation*}
\end{remark}
\begin{proof}
	By employing the Taylor expansion of the inverse tangent function i.e., $\arctan(x)<\frac{\pi}{2}-\frac{1}{x}$ for $x\geq 1$, the final expression can be derived.
\end{proof}

We now proceed to the derivation of the expressions for the average SINR and the average achievable rate for a MU within the considered network deployment, depending on its selected access mode strategy.

\subsection{Average Signal-to-Interference-Plus-Noise Ratio}
Firstly, we investigate the distribution of the SINR observed at the typical MU that communicates with a BS from the $k$-th tier i.e., $\mathcal{P}_k(\vartheta)$, for which analytical and asymptotic expressions are derived. The SINR observed by the typical MU that is served by a BS at $x\in\Phi_k^{\rm a}$, denoted as $x_{0,k}$, can be written as follows
\begin{equation}\label{SINR}
	{\rm SINR}_k = \frac{P_kG_{0,k}h_{0,k}\|x_{0,k}\|^{-a}}{\sigma^2+\mathcal{I}_k},
\end{equation}
where $h_{0,k}$ represents the channel fading gain between the typical MU and its serving BS from the $k$-th tier, $\sigma^2$ is the variance of the additive white Gaussian noise at the receiver, and $\mathcal{I}_k$ represents the aggregate interference observed by a MU that is served by the BS at $x_{0,k}\in\Phi_k$ and is given by \eqref{Interference}. Based on \eqref{SINR}, the SINR distribution can be mathematically model as
\begin{align}
	\mathcal{P}_k(\vartheta)& =\mathbb{P}\left[{\rm SINR}_k>\vartheta\right]\nonumber\\ &=\mathbb{P}\left[\frac{P_kG_{0,k}h_{0,k}\|x_{0,k}\|^{-a}}{\sigma^2+\mathcal{I}_k}>\vartheta\right]\nonumber\\
	&=\mathbb{P}\left[h_{0,k}>\frac{r_0^a(\sigma^2+\mathcal{I}_k)\vartheta}{P_k G_{0,k}}\right].\label{AverageSINRProof}
\end{align}
To overcome the difficulty on Nakagami fading, Alzer's Lemma \cite{ALZ} on the complementary cdf of a gamma random variable with integer parameter can be applied. This relates the cdf of a gamma random variable into a weighted sum of the cdfs of exponential random variables. Hence, we can bound expression \eqref{AverageSINRProof} as
\begin{equation}
	\small\mathcal{P}_k(\vartheta)<\!\sum_{\xi=1}^{\mu_k}(-1)^{\xi+1}\!\binom{\mu_k}{\xi}\!\mathbb{E}_{r_0,\mathcal{I}_k}\!\left[\!\exp\!\left(\!-\frac{\eta\xi(\sigma_n^2\!+\!\mathcal{I}_k)\vartheta}{r_0^{-a}P_kG_{0,k}}\right)\!\right],\label{AverageSINRProof2}
\end{equation}
where $\eta = \mu_k(\mu_k!)^{-\frac{1}{\mu_k}}$. By using the expression for the Laplace transform of the interference function, $\mathcal{L}_{\mathcal{I}_k}(s)$, that is evaluated in Lemma \ref{LemmaLaplace}, \eqref{AverageSINRProof2} can be re-written as

{\small\begin{equation*}
\mathcal{P}_k(\vartheta)\!\!<\!\!\sum_{\xi=1}^{\mu_k}(-1)^{\xi+1}\!\binom{\mu_k}{\xi}\!\mathbb{E}_{r_0}\!\!\left[\!\mathcal{L}_{\mathcal{I}_k}\!\left(\!\frac{\eta \xi r_0^a \vartheta}{P_kG_{0,k}}\!\right)\!\exp\!\left(\!-\frac{\eta \xi r_0^a \sigma_n^2\vartheta}{P_kG_{0,k}}\!\right)\!\right].
	\end{equation*}}
Finally, by un-conditioning on $r_0$ the above expression with the pdf given by \eqref{pdf}, $\mathcal{P}_k(\vartheta)$ can be evaluated as
\begin{equation}\label{Coverage}
\small\mathcal{P}_k(\vartheta)\!<\!\sum_{\xi=1}^{\mu_k}(-1)^{\xi+1}\binom{\mu_k}{\xi}\!\int_{0}^\infty\!\mathcal{L}_{\mathcal{I}_k}(s)\exp\left(-s\sigma_n^2\right)f_k(r){\rm d}r,
\end{equation}
where $s = \frac{\eta \xi r^a \vartheta}{P_kG_{0,k}}$. To characterize the utility function for the evolutionary game $\mathcal{G}^{(1)}$ i.e., $\mathcal{U}_1({\rm SINR}_\alpha)$, in the following Theorem we compute the average SINR observed by the typical MU that is served by a BS from the $k$-th network tier.

\begin{theorem}
The average SINR observed by a MU which is served by a BS from the $k$-th tier, is given by
\begin{align}\label{MeanSINR}
		\mathbb{E}[{\rm SINR}_k]&<\sum_{\xi=1}^{\mu_k}(-1)^{\xi+1}\binom{\mu_k}{\xi}\nonumber\\&\times\!\int_{0}^\infty\!\left(\!\int_0^\infty\!\mathcal{L}_{\mathcal{I}_k}\!(s)\!\exp\!\left(\!-s\sigma_n^2\right)\!{\rm d}\vartheta\!\right)\!f_k(r){\rm d}r.
\end{align}
\end{theorem}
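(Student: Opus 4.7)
The plan is to derive the expression for $\mathbb{E}[\mathrm{SINR}_k]$ by exploiting the ccdf-integral representation of the mean, then invoking the upper bound on $\mathcal{P}_k(\vartheta)$ that has already been established in equation \eqref{Coverage}. First I would recall that, since ${\rm SINR}_k$ is a non-negative random variable, its mean admits the standard tail representation
\begin{equation*}
\mathbb{E}[{\rm SINR}_k] = \int_0^\infty \mathbb{P}[{\rm SINR}_k > \vartheta]\, {\rm d}\vartheta = \int_0^\infty \mathcal{P}_k(\vartheta)\, {\rm d}\vartheta.
\end{equation*}
This converts the task of evaluating a mean of a ratio of correlated random variables (numerator through $h_{0,k}$ and $\|x_{0,k}\|$, denominator through $\mathcal{I}_k$ whose active-interferer density depends on the same $r_0$) into the much cleaner task of integrating the previously-derived SINR coverage bound.

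Next, I would substitute the Alzer-based bound from \eqref{Coverage}, namely
\begin{equation*}
\mathcal{P}_k(\vartheta) < \sum_{\xi=1}^{\mu_k}(-1)^{\xi+1}\binom{\mu_k}{\xi}\int_{0}^\infty \mathcal{L}_{\mathcal{I}_k}(s)\exp(-s\sigma_n^2)\, f_k(r)\, {\rm d}r,
\end{equation*}
with $s=\eta\xi r^a \vartheta/(P_k G_{0,k})$. Plugging this bound inside $\int_0^\infty (\cdot)\,{\rm d}\vartheta$ and exchanging the $\vartheta$-integral with the finite sum over $\xi$ and the $r$-integral via Fubini/Tonelli yields exactly the claimed expression. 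Note that the $r$-integral wraps around the $\vartheta$-integral in the final statement because $s$ depends on both variables, and moving the $r$-integration outside is the natural ordering that allows the inner $\vartheta$-integral to be evaluated (or numerically computed) for a fixed distance $r$ to the serving BS.

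The main technical obstacle is justifying the interchange of integration. Because the bound in \eqref{Coverage} comes from the inclusion-exclusion-type expansion of Alzer's inequality, the summands have alternating signs, so one cannot apply Tonelli directly to the bound itself; instead, I would apply Fubini to each individual term in the finite sum, for which absolute integrability reduces to checking that $\int_0^\infty\int_0^\infty \mathcal{L}_{\mathcal{I}_k}(s)\exp(-s\sigma_n^2) f_k(r)\,{\rm d}\vartheta\,{\rm d}r$ is finite. This holds because $\mathcal{L}_{\mathcal{I}_k}(s)\exp(-s\sigma_n^2)$ decays rapidly in $\vartheta$ (at rate at least $\exp(-s\sigma_n^2)$ from the noise contribution alone, or by the $s^{2/a}$ behaviour inside the exponent of $\mathcal{L}_{\mathcal{I}_k}$ in an interference-limited regime), while $f_k(r)$ integrates to one. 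A brief remark that the inequality is preserved after integration (monotonicity of the Lebesgue integral) closes the argument, delivering the bound in \eqref{MeanSINR}.
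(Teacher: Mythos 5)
Your proposal is correct and follows essentially the same route as the paper: apply the tail-integral representation $\mathbb{E}[X]=\int_0^\infty\mathbb{P}[X>x]\,{\rm d}x$ to ${\rm SINR}_k$ and integrate the Alzer-based coverage bound \eqref{Coverage} over $\vartheta$, which immediately yields \eqref{MeanSINR}. Your additional care in justifying the interchange of the $\vartheta$- and $r$-integrals term-by-term is a welcome refinement that the paper's proof leaves implicit, but it does not change the argument.
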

\begin{proof}
	The expectation of any positive random variable $X$ is given by $\mathbb{E}[X]=\int_0^\infty\mathbb{P}[X>x]{\rm d}x$. Hence, by unconditioning on $\vartheta$ the expression \eqref{Coverage}, the final expression can be obtained.
\end{proof}

{It is important to note here that, even though we are able to represent the Laplace transform of the interference in closed-form expression, it is still impossible to attain an exact closed-form expression for the coverage performance, and hence, the average achievable rate. Despite this fact, the derived expressions provide a quick and convenient methodology of evaluating the system's performance and obtaining insights into how key system parameters affect the performance.}
\subsection{Average achievable rate}
The average achievable rate represents the information rate that can be transmitted over a given bandwidth for the considered system model. More specifically, the average achievable rate of a MU that is served by the BS at $x_{0,k}\in\Phi_k$, can be mathematically described with the probability
\begin{align}
\mathcal{R}_k &= \mathbb{E}\left[B_k\log_2\left(1+{\rm SINR}_k\right)\right]\nonumber\\
&=\int_0^\infty\mathbb{P}\left[B_k\log_2\left(1+{\rm SINR}_k\right)>x\right]{\rm d}x\label{eq1}
\end{align}
where $B_k$ depicts the allocated bandwidth of the channel for the $k$-th tier, ${\rm SINR}_k$ is the observed SINR which is given by \eqref{SINR}, and \eqref{eq1} follows from the fact that the expectation of any positive random variable $X$ is given by $\mathbb{E}[X]=\int_0^\infty\mathbb{P}[X>x]{\rm d}x$. {By utilizing the distribution of the SINR and by the change of variable $\vartheta\rightarrow 2^\frac{x}{B_k}-1$, the above expression can be re-written as
\begin{equation}\label{Rate}
\mathcal{R}_k = \frac{B_k}{{\rm ln}(2)}\int_0^\infty\frac{\mathcal{P}_k(\vartheta)}{\vartheta+1}{\rm d}\vartheta,
\end{equation}
where $\mathcal{P}_k(\vartheta)$ represents the achieved coverage probability, that is given by \eqref{Coverage}.} The following theorem analytically derives the expression for the average achievable rate, $\mathcal{R}_k$, depending on the MU's access mode selection.

\begin{theorem}\label{Theorem1}
	The average achievable rate of a MU that is served by the $k$-th tier, is given by
	\begin{align}\label{MeanRate}
	\mathcal{R}_k < \frac{B_k}{{\rm ln}(2)}\sum_{\xi=1}^{\mu_k}\int_0^\infty&\Bigg(\int_0^\infty\frac{(-1)^{\xi+1}}{\vartheta+1}\binom{\mu_k}{\xi}\exp\left(-s\sigma_n^2\right)\nonumber\\&\qquad\qquad\times\mathcal{L}_{\mathcal{I}_k}(s)f_R(r){\rm d}r\Bigg){\rm d}\vartheta,
	\end{align}
	where $s = \frac{\eta \xi \vartheta r^a}{P_kG_{0,k}}$, $\eta = \mu(\mu!)^{-\frac{1}{\mu}}$, $f_R(r)$ denotes the pdf of the distance between a MU and its serving BS from the $k$-th tier, and $\mathcal{L}_{\mathcal{I}_k}$ is the Laplace transform of the interference function evaluated at $s$, which is given in Lemma \ref{LemmaLaplace}.
\end{theorem}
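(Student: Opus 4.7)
The plan is to begin from the integral representation of the average achievable rate already derived in equation \eqref{Rate}, namely
\begin{equation*}
\mathcal{R}_k = \frac{B_k}{\ln(2)}\int_0^\infty \frac{\mathcal{P}_k(\vartheta)}{\vartheta+1}\, {\rm d}\vartheta,
\end{equation*}
which was obtained by writing $\mathbb{E}[B_k\log_2(1+\mathrm{SINR}_k)]$ as $\int_0^\infty \mathbb{P}[B_k\log_2(1+\mathrm{SINR}_k) > x]\, {\rm d}x$ and performing the change of variable $x = B_k\log_2(1+\vartheta)$, which introduces the Jacobian factor $B_k/((\vartheta+1)\ln 2)$ and turns the tail probability into the coverage probability $\mathcal{P}_k(\vartheta)$.

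Next, I would substitute the upper bound on $\mathcal{P}_k(\vartheta)$ derived in \eqref{Coverage}, which followed from applying Alzer's bound to the Nakagami-$\mu$ channel power of the serving link, recognizing the expectation over the interference as $\mathcal{L}_{\mathcal{I}_k}(s)$ with $s=\eta\xi r^a\vartheta/(P_kG_{0,k})$ (Lemma \ref{LemmaLaplace}), and un-conditioning on the serving distance with the density $f_k(r)$ given in \eqref{pdf}. Substituting the bound into the $\vartheta$-integral and exchanging the order of integration between $\vartheta$ and $r$ (together with pulling the finite sum over $\xi$ outside both integrals) yields exactly the double integral in the statement of the theorem, with the Jacobian $1/(\vartheta+1)$ inherited from the rate formula and the inequality sign inherited from Alzer's bound in \eqref{Coverage}.

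The main obstacle is the rigorous justification of the order swap, because the Alzer bound produces an alternating sum with signs $(-1)^{\xi+1}$, so Tonelli's theorem cannot be applied directly to the integrand as written. The cleanest route is to bound the inner integrand in absolute value: $\mathcal{L}_{\mathcal{I}_k}(s)\in[0,1]$ and $\exp(-s\sigma_n^2)\in[0,1]$, so each summand is dominated by $\binom{\mu_k}{\xi}f_k(r)/(\vartheta+1)$; since $\int_0^\infty f_k(r)\,{\rm d}r$ is finite and the $\vartheta$-integral of the original rate bound converges (this is how one derives rate expressions of this form in the stochastic-geometry literature, e.g.\ \cite{AND}), Fubini applies summand-by-summand and the exchange is valid. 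After the swap, relabelling the integration variables gives the compact form claimed in \eqref{MeanRate}, completing the proof. The remainder is purely bookkeeping: identifying $s$ and $\eta$ as stated, and noting that $f_R(r)\equiv f_k(r)$ is the LoS serving-distance pdf of the $k$-th tier.
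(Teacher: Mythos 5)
Your proposal matches the paper's proof, which is simply the substitution of the coverage bound \eqref{Coverage} into the rate expression \eqref{Rate}; the inequality carries through because the integrand in \eqref{Rate} is nonnegative. Note that the interchange of the $\vartheta$- and $r$-integrals you labor over is not actually needed: \eqref{Rate} already has $\vartheta$ outermost and \eqref{Coverage} has $r$ innermost, so direct substitution (plus pulling the finite $\xi$-sum out by linearity) yields \eqref{MeanRate} as written.
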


\begin{proof}
	By substituting the expression \eqref{Coverage} in \eqref{Rate}, we conclude to the desired expression.
\end{proof}

\setcounter{equation}{22}
\begin{figure*}[t!]{\small
		\begin{align}\label{Dynamics1}
		\dot{\chi}^{(1)}_\alpha(t)&=\varrho \chi^{(1)}_\alpha(t)\Bigg(w_1\sum_{\xi=1}^{\mu_k}(-1)^{\xi+1}\binom{\mu_k}{\xi}\int_{0}^\infty\left(\int_0^\infty\mathcal{L}_{\mathcal{I}_k}(s)\exp\left(-s\sigma_n^2\right){\rm d}\vartheta\right)f_k(r){\rm d}r\nonumber\\
		& +w_2\left(\frac{2^{n_k}uT_a\sqrt{\lambda_k}}{\pi}+\left(1+\frac{4u\sqrt{\lambda_k}}{\pi}-\exp\left(-\frac{u\lambda_b}{2\pi}\left(R\zeta+\mathbb{E}[W]{\rm Si}[2\pi]\right)\right)\right)T_s\right)\Bigg),
		\end{align}
		\begin{align}\label{Dynamics2}
		\dot{\chi}^{(2)}_\alpha(t)&=\varrho \chi^{(2)}_\alpha(t)\Bigg(\frac{w_1B_k}{{\rm ln}(2)}\sum_{\xi=1}^{\mu_k}\int_0^\infty\left(\int_0^\infty\frac{(-1)^{\xi+1}}{\vartheta+1}\binom{\mu_k}{\xi}\exp\left(-s\sigma_n^2\right)\mathcal{L}_{\mathcal{I}_k}(s)f_R(r){\rm d}r\right){\rm d}\vartheta\nonumber\\
		& +w_2\left(\frac{2^{n_k}uT_a\sqrt{\lambda_k}}{\pi}+\left(1+\frac{4u\sqrt{\lambda_k}}{\pi}-\exp\left(-\frac{u\lambda_b}{2\pi}\left(R\zeta+\mathbb{E}[W]{\rm Si}[2\pi]\right)\right)\right)T_s\right)\Bigg),
		\end{align}}
	\hrulefill\vspace{-0.5cm}
\end{figure*}
\setcounter{equation}{19}

\subsection{Handover rate}

\begin{figure}[t!]
	\centering\includegraphics[width=0.7\linewidth]{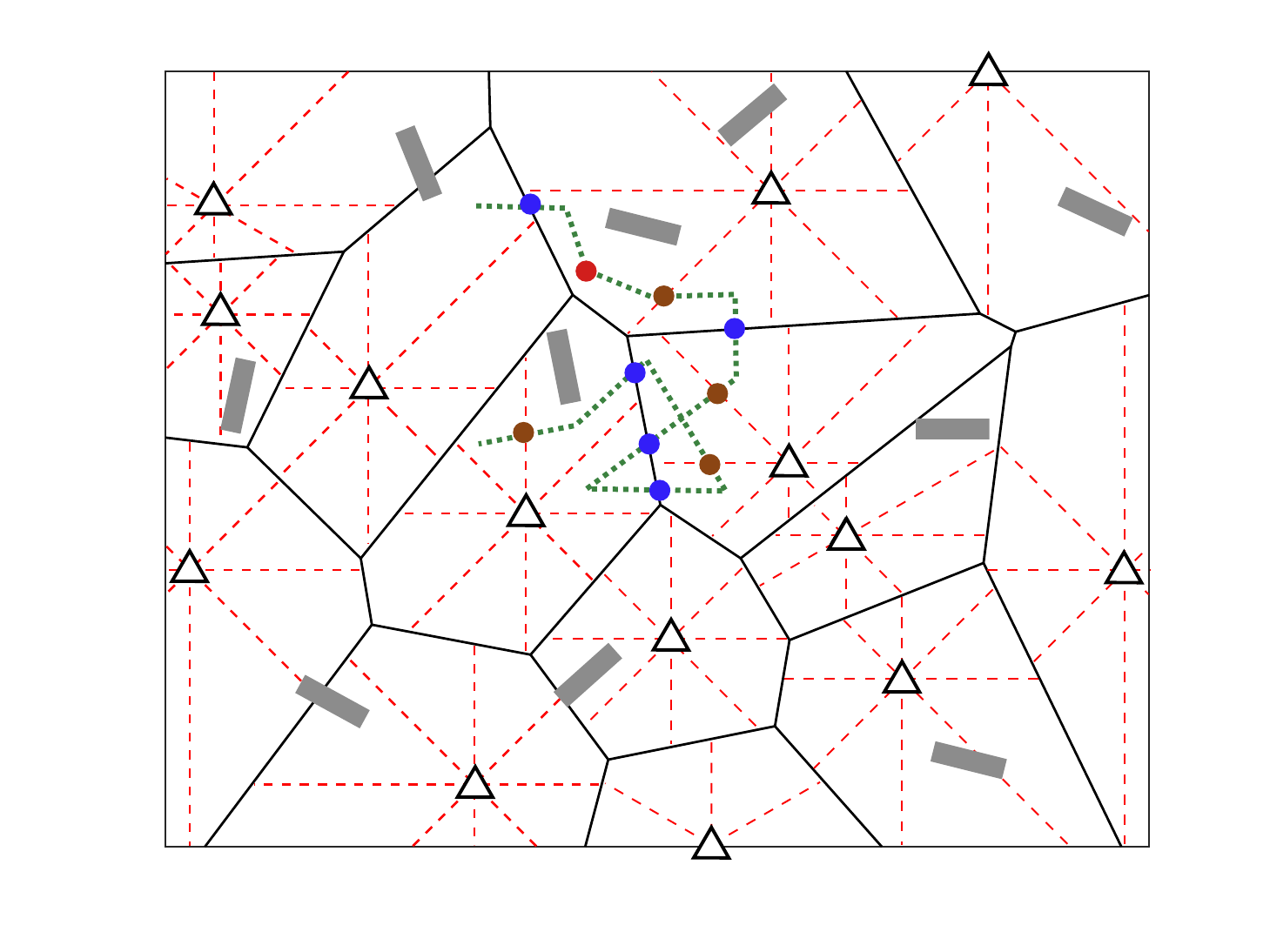}\caption{The Voronoi tessellation of a single-tier cellular network, where the BSs, the waypoints of a MU, and the blockages are represented by triangles, points, and rectangles, respectively. In the considered network deployment, each BS has $2^3 = 8$ beams i.e., $n_k = 3$, where the beam boundaries are depicted by red dashed lines. A MU's trajectory is illustrated by the dotted green line. Blue, brown, and red points represent the a-InterCH locations, the b-InterCH locations, and the IntraCH locations, respectively.
	}\label{Topology}\vspace{-0.5cm}
\end{figure}

In the depicted system model, Fig. \ref{Topology}, several BS handover processes may occur due to the MUs' mobility within the network area. Generally, a handover process is triggered when a MU re-selects a new beam from either its current serving BS, also known as \textit{intra-cell handover} (IntraCH), or a new serving BS, also known as \textit{inter-cell handover} (InterCH), aiming to stay connected with the best serving BS beam.

IntraCH procedure refers to the handover of a MU to a new beam from the current serving BS. Specifically, this procedure is executed when a MU crosses a beam boundary within the cell of its serving BS i.e., moves from the main lobe of one beam to that of another beam. Thus, a new beam has to be reselected at the beam boundary within the Voronoi cell of a BS, in order for the MU to stay connected with the main lobe of the best beam. In Fig. \ref{Topology}, the locations of beam reselections are represented by brown filled circles. Under the considered system model, the average rate of beam reselections for the $k$-th tier, $\delta_r(n)$, is equal to \cite{KAL}
\begin{equation}
\delta_{r}(n_k) = \frac{2^{n_k}\sqrt{\lambda_k}}{\pi}u.
\end{equation} 
 
As for the InterCH procedures, these can be classified into two categories, namely association-related InterCH (a-InterCH) and blockage-related InterCH (b-InterCH). Initially, based on the standard handover mechanism defined in the 3GPP specifications \cite{JAI}, the a-InterCH procedure refers to the BS handover that occurs when the received signal strength (RSS) of the candidate/target BS exceeds that of the serving BS. The main purpose of such mechanism is to to maintain the best connectivity throughout the MU's trajectory. In the depicted single-tier association case, an a-InterCH procedure is executed when a MU crosses a voronoi cell, by connecting the user to the BS in the center of that voronoi cell. Thus, a new beam at the next serving BS need to be selected during an a-InterCH procedure. Based on \cite{ARS}, the time intensity of cell boundary crossing i.e., BS handover, is $\delta_a=\frac{4\sqrt{\lambda_k}}{\pi}u$.

In addition to the aforementioned, the peculiarities of the mmWave propagation may call for much more frequent cell handovers. Indeed, the existence of blockages may cause frequent interruptions to the LoS path between a MU and its serving BS. This, in turn, leads to a rapid degradation in the RSS that could result in unwanted outages. To ensure the MUs' ubiquity and reliable connectivity in dense mmWave deployments, b-InterCH processes are executed, where a MU can handover to a new beam of other available BSs, if the current serving BS gets blocked. In the following Lemma, we analytically evaluate the time intensity of BS handover due to the existence of blockage effects.

\begin{lemma}\label{BlockageHandover}
	The average rate of BS handover due to blockage effects for the MU moving according to the RWP mode, is given by
	\begin{equation}
	\delta_b(u) = 1-\exp\left(-\frac{u\lambda_b}{2\pi}\left(R\zeta+\mathbb{E}[W]{\rm Si}[2\pi]\right)\right),
	\end{equation} 
	where $\zeta$ is a constant variable that is equal to $\zeta=\gamma+\log[\pi]-{\rm Ci}[\pi]$, $\gamma$ represents the Euler–Mascheroni constant; ${\rm Si}[\cdot]$ and ${\rm Ci}[\cdot]$ denote the sine and cosine integral functions, respectively.
\end{lemma}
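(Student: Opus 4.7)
The plan is to compute the mean intensity at which a new blockage rectangle begins to obstruct the LoS segment between the MU and its serving BS as the MU moves with speed $u$ under the RWP model, and then convert this intensity into the probability of at least one blockage-induced handover per unit time via the Poisson nature of the blockage field. Intuitively, as the segment of length $R$ translates at speed $u$, it sweeps a thin strip of the plane, and each rectangle whose center lies in the appropriate Minkowski-sum ``newly entered'' region produces one b-InterCH event.

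First, I would fix the serving BS at distance $R$ from the MU and let the MU move at speed $u$ in a direction $\Phi\sim\mathcal{U}(0,2\pi]$. Invoking Campbell's theorem for the homogeneous PPP of rectangle centers (intensity $\lambda_b$), together with the independence of the marks $L$, $W$, and $\Theta\sim\mathcal{U}(0,2\pi]$, the mean number of rectangles that first intersect the segment in a small time $dt$ equals $\lambda_b\,dt$ times the mean ``relative-motion cross section'' of a typical rectangle with respect to the segment. Decomposing the rectangle's boundary into its two long sides (of total length $2L$) and its two short sides (of total length $2W$), this cross section splits cleanly into a term proportional to $R$ (rectangles sliding into the interior of the segment) and a term proportional to $\mathbb{E}[W]$ (rectangles entering through an endpoint of the segment).

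Next, I would evaluate the two angular integrals explicitly. Averaging $|\sin(\Theta-\Phi)|$-type kernels over the joint uniform distribution of $\Theta$ and $\Phi$ reduces the $R$-contribution, after a standard change of variables, to the classical identity $\int_0^{\pi}(1-\cos\phi)/\phi\,d\phi=\gamma+\log\pi-{\rm Ci}[\pi]=\zeta$, whereas the endpoint contribution reduces to the sine integral ${\rm Si}[2\pi]=\int_0^{2\pi}(\sin x)/x\,dx$. The prefactor $\tfrac{1}{2\pi}$ arises from normalising the uniform orientation distribution. Combining these, the mean event intensity becomes
\begin{equation*}
\nu(u)=\frac{u\lambda_b}{2\pi}\bigl(R\zeta+\mathbb{E}[W]\,{\rm Si}[2\pi]\bigr).
\end{equation*}

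Finally, because the blockages form a PPP and the individual blocking contributions are independent, the number of b-InterCH events in a unit time interval is Poisson with mean $\nu(u)$, so the probability of at least one b-InterCH event is $\delta_b(u)=1-\exp(-\nu(u))$, which matches the claim. \textbf{Main obstacle.} The main technical hurdle will be the geometric bookkeeping: carefully isolating the ``long-side'' and ``short-side'' entry contributions for a moving rectangle relative to the moving segment, and computing the resulting trigonometric averages so that the specific constants $\zeta$ and ${\rm Si}[2\pi]$ emerge in closed form rather than generic functions of the blockage statistics.
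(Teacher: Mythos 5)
Your overall architecture is the same as the paper's: condition on a blockage's marks, identify the region in which a rectangle's center must fall for it to (newly) obstruct the moving link, compute the expected number of such centers per unit time via the PPP mean measure (Campbell), and convert to $\delta_b(u)=1-\mathbb{P}[\text{no such blockage}]$ using the Poisson void probability. Those two bookends of your plan are fine and match Appendix~C, where the paper writes the conditional swept area as $|S(\ell,w,\theta,\varphi,u)|=\frac{Ru}{2}|\sin(\varphi)|+wu|\cos(\theta-\varphi)|+\ell u|\sin(\theta-\varphi)|$, averages it against $\lambda_b$ and the mark distributions to get $\mathbb{E}[K]=\frac{u\lambda_b}{2\pi}\left(R\zeta+\mathbb{E}[W]\,{\rm Si}[2\pi]\right)$, and sets $\delta_b=1-\mathbb{P}[K=0]$.

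The genuine gap is in the step you yourself flag as the "main obstacle": the evaluation of the angular averages. You assert that averaging $|\sin(\Theta-\Phi)|$-type kernels over the joint uniform distribution of $\Theta$ and $\Phi$ reduces, "after a standard change of variables," to $\int_0^{\pi}\frac{1-\cos\phi}{\phi}\,d\phi=\zeta$ and to ${\rm Si}[2\pi]$. That mechanism cannot be right as stated: a uniform average of $|\sin(\cdot)|$ or $|\cos(\cdot)|$ over a full period is the elementary constant $2/\pi$, and no change of variables introduces the $1/\phi$ weight needed for ${\rm Ci}$, ${\rm Si}$, $\gamma$, or $\log\pi$ to appear. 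So your plan names the correct constants (which you were given in the statement) but does not supply a derivation that would actually produce them; the real work of the lemma lives precisely there. Relatedly, your long-side/short-side decomposition silently drops the length contribution $\ell u|\sin(\theta-\varphi)|$ that appears in the conditional swept area, and you do not explain why no $\mathbb{E}[L]$ term survives in the final expression while an $\mathbb{E}[W]$ term does; any complete proof has to account for that asymmetry rather than build it in by fiat. As written, the proposal reproduces the paper's skeleton but leaves its quantitative core unproven.
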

\begin{proof}
	See Appendix \ref{Appendix3}.
\end{proof}

\subsection{Mobility-Induced Time Overhead}
Such handover processes result in a huge time overhead due to the time spent in beam sweeping and alignment. Specifically, the overall time required for successful execution of the handover processes consists of two components. Initially, the successful operation of both InterCH procedures (i.e., a-InterCH and b-InterCH) involves a beam sweeping process which requires time $T_s$. {It is important to note that, the handover events imposed by the considered InterCH processes are not independent. However, due to the tedious analysis for the exact number of InterCH events, in this paper, we assume that the a-InterCH and b-InterCH processes are independent between each other, which provides a tight lower bound for the performance. The validity of the assumptions will be shown in the numerical results.} Finally, the IntraCH procedures contribute to the overall time overhead with the time required for the beam alignment, $T_a$. By assuming a baseband processing chain for each antenna panel \cite{KAL}, the total average overhead per unit time of a MU that is served by a BS from the $k$-th tier, is given by
\begin{equation}\label{Time}
T_{\rm HO}(k,u)=\delta_r(k,u)T_a+\left(\delta_a(k,u)+\delta_b(u)\right)T_s.
\end{equation}
{It is important to mention here that, multiple handover procedures can be triggered during a MU's movement towards its destination. Under such scenario, the average rate of the considered handover procedures (i.e., IntraCH and/or InterCH processes) further increases, escalating the time overhead required for the incurred handover procedures.}
\setcounter{equation}{25}
\begin{figure*}[t!]{\small
		\begin{equation}\label{PF}
		\pi_\varrho^{(1)}(t)\!=\!w_1\!\prod_{\mathcal{M}_\varrho}\frac{p_{\mathcal{M}_\varrho}G_{0,\varrho}R^2}{\pi\lambda_\varrho\chi^{(1)}_\varrho\mathcal{M}_\varrho\widetilde{r}^2(R^2-\widetilde{r}^2)}\!+\!w_2\!\left(\frac{2^{n_\varrho}uT_a\sqrt{\lambda_\varrho}}{\pi}\!+\!\left(1\!+\!\frac{4u\sqrt{\lambda_\varrho}}{\pi}\!-\!\exp\left(-\frac{u\lambda_b}{2\pi}\left(\widetilde{r}\zeta+\mathbb{E}[W]{\rm Si}[2\pi]\right)\right)\right)T_s\right),
		\end{equation}}
	\hrulefill\vspace{-0.5cm}
\end{figure*}
\setcounter{equation}{24}
\section{Evolutionary Equilibrium and Stability Analysis}\label{Equilibrium}

\subsection{Existence of evolutionary equilibrium}
In the context of EGT, the evolutionary equilibrium refers to the fixed points of the replicator dynamics, which is considered to be the solution of the evolutionary games $\mathcal{G}^{(1)}$ and $\mathcal{G}^{(2)}$ for the access mode selection \cite{SEM}. More specifically, the population state at the equilibrium point does not change, and therefore, the rate of the strategy adaptation will be zero (i.e., $\dot{\chi}^{(1)}_\alpha(t)=0$ and $\dot{\chi}^{(2)}_\alpha(t)=0$). At the evolutionary equilibrium, all players (i.e., MUs) obtain the same payoff as the average payoff of the population, and hence, no player has the willingness to change its strategy. In this way, the evolutionary equilibrium can also provide fairness among the MUs.

By using the expressions given in \eqref{MeanSINR}, \eqref{MeanRate}, and \eqref{Time}, the replicator dynamics $\dot{\chi}^{(1)}_\alpha(t)$ and $\dot{\chi}^{(2)}_\alpha(t)$ of the investigated games $\mathcal{G}^{(1)}$ and $\mathcal{G}^{(2)}$, respectively, can be expressed by \eqref{Dynamics1} and \eqref{Dynamics2}, respectively, for all $\alpha\!\in\!\mathcal{A}$, where $s\! =\! \frac{\eta \xi \vartheta r^a}{P_kG_{0,k}}$, $\eta = \mu(\mu!)^{-\frac{1}{\mu}}$, and $\mathcal{L}_{\mathcal{I}_k}(s)$ depicts the Laplace transform of the aggregate interference function, that is given in Lemma \ref{LemmaLaplace}. 

As we mentioned earlier, the computation of the evolutionary equilibrium corresponds in solving the system of algebraic equations \eqref{Dynamics1} and \eqref{Dynamics2} by setting the left hand side of the replicator dynamics to zero, i.e., $\dot{\chi}^{(1)}_\alpha(t)=0$ and $\dot{\chi}^{(2)}_\alpha(t)=0$. It is important to mention here that, two types of evolutionary equilibrium can be investigated in the context of replicator dynamics, namely boundary and interior evolutionary equilibrium. On the one hand, the boundary evolutionary equilibrium of game $\mathcal{G}^{(\kappa)}$, where $\kappa=\{1,2\}$, refers to the scenario where there exists a population share $\chi_\alpha^{(\kappa)}=1$, while $\chi_b^{(\kappa)}=0$ for all $b\neq\alpha\in\mathcal{A}$. On the other hand, the interior equilibrium refers to scenario where $\chi_\alpha^{(\kappa)}\in(0,1),\ \forall\alpha\in\mathcal{A}$. It is obvious that boundary evolutionary equilibrium is not stable because any small perturbation will make the system deviate from the equilibrium state. Therefore, in the following section, we evaluate the stability of the interior evolutionary equilibrium.
\subsection{Stability of evolutionary equilibrium}
According to the definition of the replicator dynamics, the stability of the interior evolutionary equilibrium is evaluated by solving $\dot{\chi}^{(1)}_\alpha(t)=0$ and $\dot{\chi}^{(2)}_\alpha(t)=0$. In other words, the eigenvalues of the Jacobian matrix, which correspond to the replicator dynamics, need to be evaluated; then, the system is stable if all eigenvalues have a negative real part \cite{SEM}.

The stability of an evolutionary game is analytically tractable in cases of closed-form expressions for the replicator dynamics \eqref{Dynamics1} and \eqref{Dynamics2}, which are obtained from the SG-based analysis. In the following, we analyze the stability of the equilibrium of the investigated game $\mathcal{G}^{(1)}$, under specific practical assumptions. The stability of the considered evolutionary game $\mathcal{G}^{(2)}$ is not analyzed since $\dot{\chi}^{(2)}_\alpha(t)$ is not available in closed-form. Nevertheless, the stability of the game $\mathcal{G}^{(2)}$ will be obtained and illustrated by numerical simulation in Section \ref{Numerical}.

In order to derive closed-form expressions for the average SINR i.e., $\mathbb{E}[{\rm SINR}_k]$, we adopt the upper bound of the Laplace transform of the aggregate interference, which is evaluated in Remark \ref{Remark2}. We further assume that the small-scale fading between the typical MU and its associated BS is modeled as Rayleigh fading. Thus, we assume that the Nakagami-m parameters for the LoS communication links fulfil the expression $\mu_k=1,\ \forall k\in\{1,\dots,K\}$. Finally, the analysis for the stability of the interior evolutionary equilibrium of the game $\mathcal{G}^{(1)}$ is performed for the scenario of a two-tier HetNet deployment, where all MUs are at distance $\widetilde{r}$ from their serving BSs. In the following Remark, we provide a closed-form expression for the average SINR, $\mathbb{E}[{\rm SINR}_k]$, which will be useful for studying the existence and the stability of the evolutionary equilibrium.
\setcounter{equation}{24}
\begin{remark}\label{Simplified}
	The average SINR of a MU at distance $\widetilde{r}$ from its serving BS from the $k$-th tier with $p_{\rm L}(r)=\mathds{1}_{r<R_B}$, $a=4$, and Rayleigh fading, is given by
	\begin{equation}
		\mathbb{E}[{\rm SINR}_k]=\prod_{\mathcal{M}_k}\frac{p_{\mathcal{M}_k}G_{0,k}R^2}{\pi\lambda_k\chi^{(1)}_k\mathcal{M}_k\widetilde{r}^2(R^2-\widetilde{r}^2)}.
	\end{equation}
\end{remark}
\begin{proof}
	By setting $\mu_k=1\ \forall k\in\{1,\dots,K\}$ and by using \cite[3.310]{GRA}, the final expression can be derived.
\end{proof}

By substituting the upper bound of the average SINR, that is derived in Remark \ref{Simplified}, in \eqref{Dynamics1}, the payoff function of each strategy $\varrho$ is given by \eqref{PF}, where $\varrho=\{1,2\}$. For the stability of the system, we state the following Theorem, where we drop the notation $(t)$ for simplicity.
\setcounter{equation}{26}
\begin{theorem}\label{Stability}
	For a two network tier deployment and Rayleigh fading, the interior evolutionary equilibrium of the game $\mathcal{G}^{(1)}$ is asymptotically stable.
\end{theorem}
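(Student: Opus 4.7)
The plan is to exploit the fact that for a two-tier deployment the simplex $\chi_1^{(1)}+\chi_2^{(1)}=1$ is one-dimensional, so the replicator system collapses to a single scalar ODE whose linearisation possesses a single eigenvalue whose sign alone determines asymptotic stability. First I would substitute $\chi_2^{(1)}=1-\chi_1^{(1)}$ into $\overline{\pi}^{(1)}=\chi_1^{(1)}\pi_1^{(1)}+\chi_2^{(1)}\pi_2^{(1)}$ to obtain the reduced dynamics
\begin{equation*}
\dot{\chi}_1^{(1)}=\varrho\,\chi_1^{(1)}\bigl(1-\chi_1^{(1)}\bigr)\bigl(\pi_1^{(1)}-\pi_2^{(1)}\bigr),
\end{equation*}
so that any interior equilibrium $\chi_1^{*}\in(0,1)$ is characterised by the indifference condition $\pi_1^{(1)}(\chi_1^{*})=\pi_2^{(1)}(\chi_1^{*})$, as expected from the discussion preceding the theorem.

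Next, I would linearise this scalar ODE around $\chi_1^{*}$. Differentiating with respect to $\chi_1^{(1)}$ and using the indifference condition to eliminate the cross term, the only eigenvalue of the linearisation becomes
\begin{equation*}
J=\varrho\,\chi_1^{*}\bigl(1-\chi_1^{*}\bigr)\left.\frac{d(\pi_1^{(1)}-\pi_2^{(1)})}{d\chi_1^{(1)}}\right|_{\chi_1^{*}}.
\end{equation*}
Since $\chi_1^{*}(1-\chi_1^{*})>0$ at any interior point, asymptotic stability reduces to proving that $d(\pi_1^{(1)}-\pi_2^{(1)})/d\chi_1^{(1)}<0$ at $\chi_1^{*}$.

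For this monotonicity step I would invoke the closed-form payoff \eqref{PF} supplied by Remark~\ref{Simplified}. Its SINR component is a product over $\mathcal{M}_\varrho\in\{M_\varrho,m_\varrho\}$ of strictly positive factors, each depending on the population share only through $1/\chi_\varrho^{(1)}$, while the mobility-induced time-overhead term is independent of the population shares altogether. Hence $\pi_\varrho^{(1)}$ is a strictly decreasing positive function of $\chi_\varrho^{(1)}$, which yields $\partial\pi_1^{(1)}/\partial\chi_1^{(1)}<0$ and, through $\chi_2^{(1)}=1-\chi_1^{(1)}$, $\partial\pi_2^{(1)}/\partial\chi_1^{(1)}>0$. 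Adding the two contributions gives $d(\pi_1^{(1)}-\pi_2^{(1)})/d\chi_1^{(1)}<0$ on the whole interior, hence $J<0$, establishing the claim.

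The main obstacle will be making the monotonicity argument rigorous across both tiers: the product in the SINR term contains two factors for the mmWave tier but only one for the omni-directional sub-$6$~GHz tier, so I would need to verify that every factor carries a strictly positive constant in its numerator in order for the derivative of the product to inherit the correct sign. This is where the assumptions of the theorem enter essentially, since Remark~\ref{Simplified} together with $R>\widetilde{r}$ and the positivity of $p_{\mathcal{M}_\varrho}$, $G_{0,\varrho}$, $\mathcal{M}_\varrho$ and $\lambda_\varrho$ ensures exactly this structure; no further hypothesis beyond the two-tier and Rayleigh-fading setting is needed.
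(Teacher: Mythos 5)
Your proposal is correct and follows essentially the same route as the paper: reduce the two-tier replicator dynamics to the scalar equation $\dot{\chi}_1^{(1)}=\varrho\,\chi_1^{(1)}(1-\chi_1^{(1)})(\pi_1^{(1)}-\pi_2^{(1)})$, use the indifference condition at the interior equilibrium to drop the cross term, and conclude negativity of the single eigenvalue from the closed-form payoff of Remark~\ref{Simplified} being strictly decreasing in the own population share. The only (harmless) deviation is that you take $\partial\pi_2^{(1)}/\partial\chi_1^{(1)}>0$ via the chain rule through $\chi_2^{(1)}=1-\chi_1^{(1)}$, whereas the paper simply sets this derivative to zero; both choices yield the same negative sign and hence the same conclusion.
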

\begin{proof}
	See Appendix \ref{Appendix4}.
\end{proof}
It is important to mention here that, the above Theorem does not always holds for the delayed replicator dynamics. Based on the occurred delay on the strategy adaptation, the evolutionary equilibrium can be either stable or not. More specifically, if small delays occurred on the replicator dynamics, the evolutionary equilibrium is stable; otherwise, the evolutionary equilibrium is not stable. Due to the high non-linearity of the delayed replicator dynamics \cite{SEM}, the convergence of the proposed EGT-based access mode algorithm is numerically evaluated in Section \ref{Numerical}.

\section{Numerical Results}\label{Numerical}

\begin{table*}[t]\centering
	\caption{Simulation Parameters.}\label{Table2}
	\scalebox{0.85}{
		\begin{tabular}{| l | l || l | l |}\hline
			\textbf{Parameter} & \textbf{Value} & \textbf{Parameter} & \textbf{Value}\\\hline
			Number of Tiers ($K$) & $3$ & Side lobe gain ($m_1,m_2,m_3$) & $\{0,-5,-10\}$ dB \\\hline
			Radius ($\rho$) & $1000$ m & Beamwidth ($\phi_1,\phi_2,\phi_3$) & $\{2\pi,\pi/3,\pi/6\}$ \\\hline
			BSs' Density ($\lambda_1,\lambda_2,\lambda_3$) & $\{5,100,500\}$ BSs/km$^2$ & Weights ($w_1,w_2$) & $\{1,1\}$  \\\hline
			Transmit power ($P_1,P_2,P_3$) & $\{40,35,30\}$ dBm & Blockage density ($\lambda_b$) & $100$ Blockages/km$^2$ \\\hline
			MUs' density ($\lambda$) & $500$ MUs/km$^2$ & Mean blockage length ($\mathbb{E}[L]$) & $10$ m \\\hline
			Maximum speed ($v_{\rm max}$) & $100$ km/h & Mean blockage width ($\mathbb{E}[W]$) & $10$ m  \\\hline
			Nakagami parameters ($\mu_1,\mu_2,\mu_3$) & $\{1,4,4\}$ & Beam alignment overhead  ($T_a$) & $1$ ms  \\\hline
			Main lobe gain ($M_1,M_2,M_3$) & $\{0,5,10\}$ dB & Beam sweeping overhead ($T_s$) & $1$ ms  \\\hline
	\end{tabular}}\vspace{-3mm}
\end{table*}

{In this section, the performance of the proposed MU access mode selection strategy is evaluated in the context of three-tier HetNet composed of a single sub-$6$ GHz MCell (denoted as ``Tier $1$'') overlaid with $2$ mmWave SCells (denoted as ``Tier $2$'' and ``Tier $3$''). A summary of the model parameters is provided in Table \ref{Table2}. Please note that, the selection of the simulation parameters is made for the sake of the presentation. The use of different values leads to a shifted network performance, but with the same observations and conclusions.}

\begin{figure}[!t]
	\centering
	\subfloat[]{\includegraphics[width=0.595\linewidth]{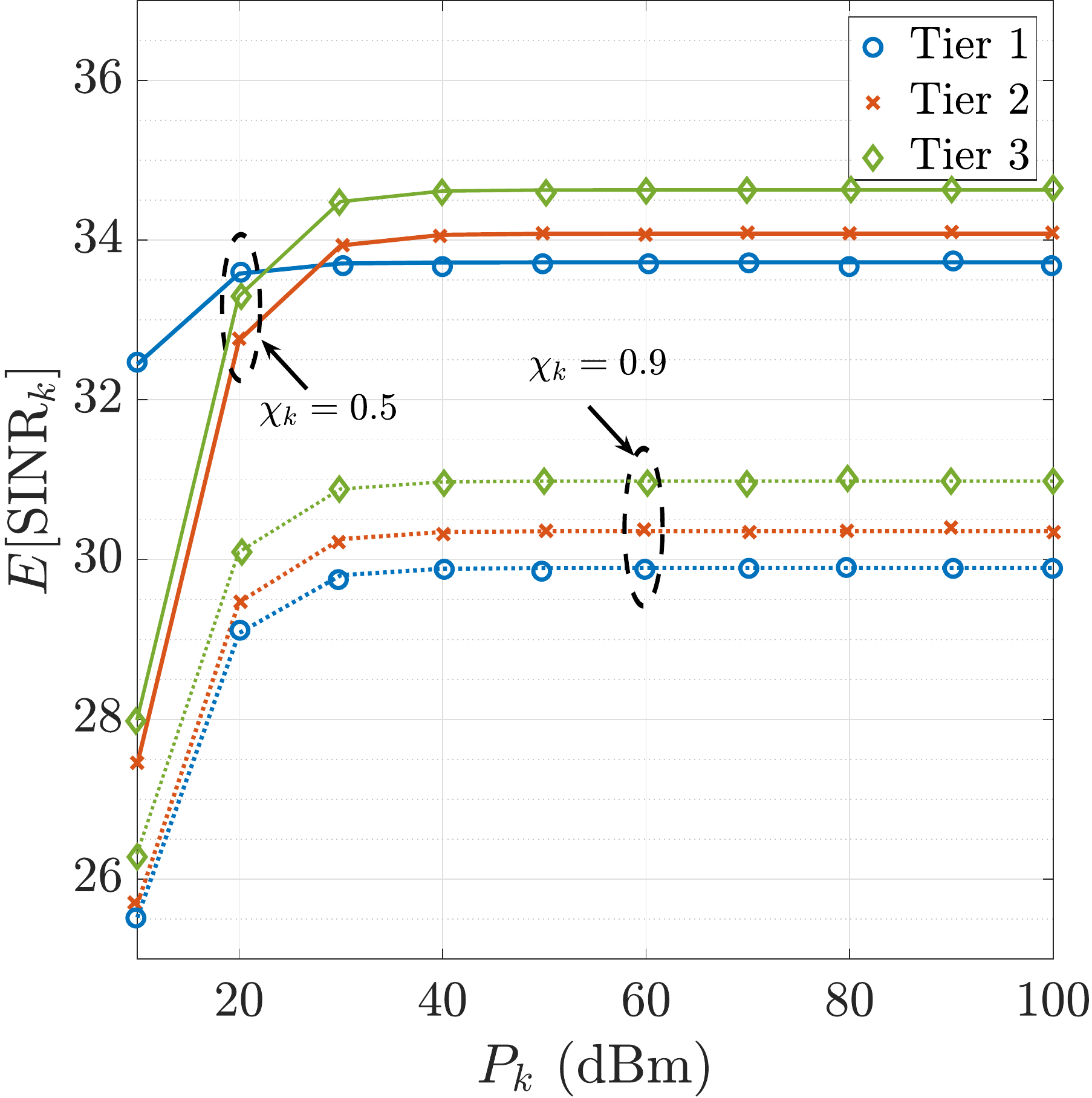}%
		\label{FirstFigure}}
	\hfil
	\subfloat[]{\includegraphics[width=0.6\linewidth]{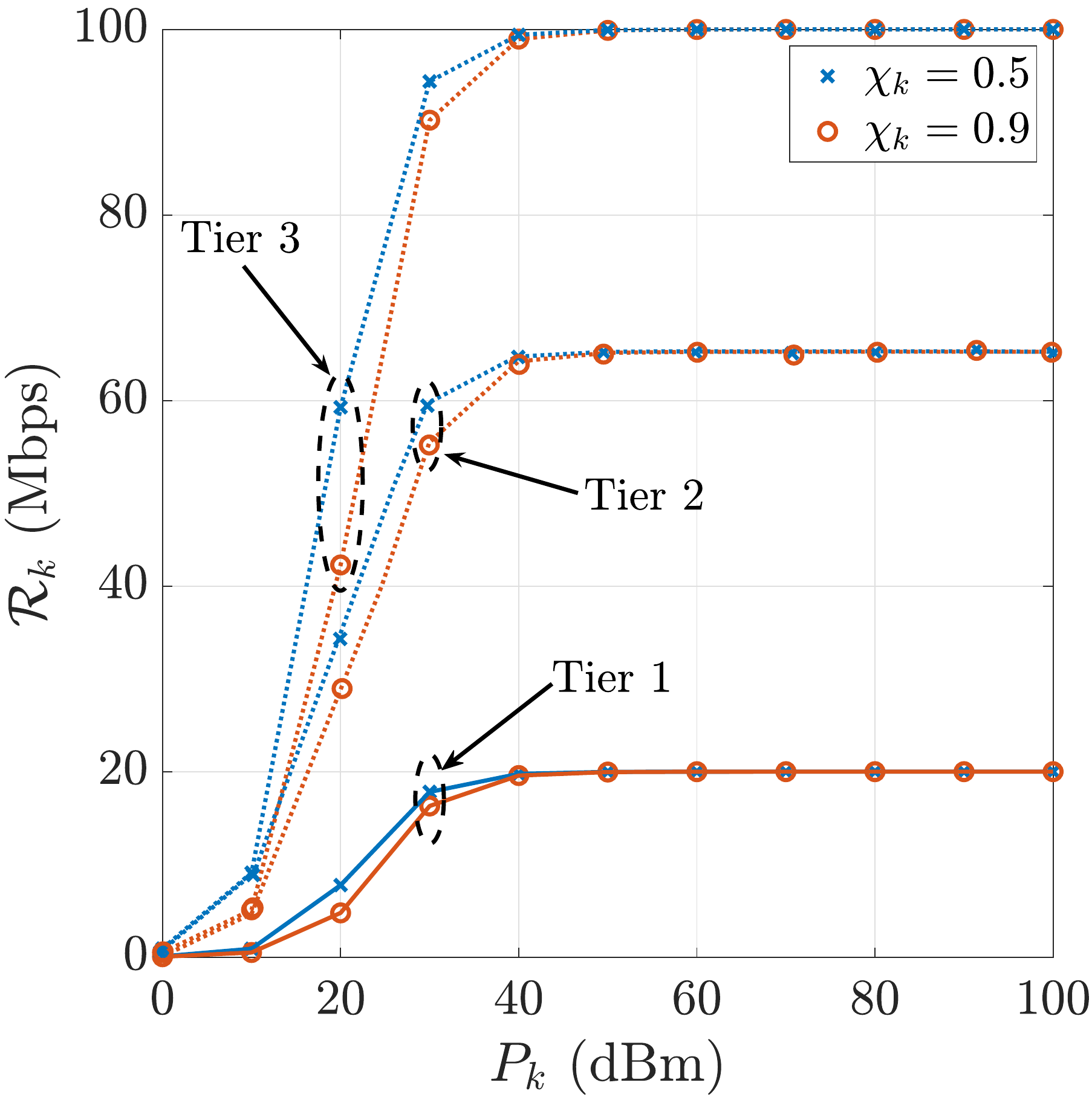}%
		\label{SecondFigure}}
	\caption{The payoff functions for the considered EGT-based games $\mathcal{G}^{(1)}$ and $\mathcal{G}^{(2)}$, where $\chi_k=\{0.5,0.9\}$.}\vspace{-0.5cm}
\end{figure}

Fig. \ref{FirstFigure} shows the achieved average SINR by all access mode selections of the considered three-tier HetNet deployments in terms of the BSs' transmission power $P_k$, where $\chi_k=\{0.5,0.9\}$. It is clear from the figure that the average SINR converges to a constant floor in all cases for high transmission powers. This is due to the fact that as the transmission power of the network's nodes increases, the noise in the network becomes negligible. The mmWave SCells (i.e., cases ``Tier $2$'' and ``Tier $3$'') achieve slightly better performance compared to the sub-$6$ GHz MCells (i.e., case ``Tier $1$'') since the unique features of mmWave signals assist in suppressing the multi-user interference. Nevertheless, the sub-$6$ GHz MCells suffer less in terms of path-loss attenuation and therefore it performs significantly better for low transmission powers, whereas the performance of the mmWave SCells is degraded by the severe propagation losses. Another important observation is that, the performance achieved by both sub-$6$ GHz MCells and mmWave SCells is adversely affected by the increase in the proportion of the served MUs. This was expected, since by increasing the number of MUs served by a certain network tier, more and more BSs are activated, leading to an increased multi-user interference, and therefore the average SINR is significantly decreases. The same behaviour is also observed in Fig. \ref{SecondFigure}, which depicts the average achievable rate with $\chi_k=\{0.5,0.9\}$ for all access mode selections. As expected, the performance of all network tiers improves with the increase of transmit power. However, it is obvious from Fig. \ref{SecondFigure} that the performance of MUs that are associated with the mmWave SCells (i.e., cases ``Tier $2$'' and ``Tier $3$'') is significantly higher than that of MUs that are associated with the sub-$6$ GHz MCells (i.e., case ``Tier $1$''). This is explained by the fact that the mmWave SCells offer an enormous channel bandwidth for their communication with the MUs, as opposed to the scarce channel resources of sub-$6$ GHz MCells. Finally, the agreement between the theoretical curves (solid and dashed lines) and the simulation results (markers) validates our mathematical analysis.

\begin{figure}
	\centering\includegraphics[width=0.6\linewidth]{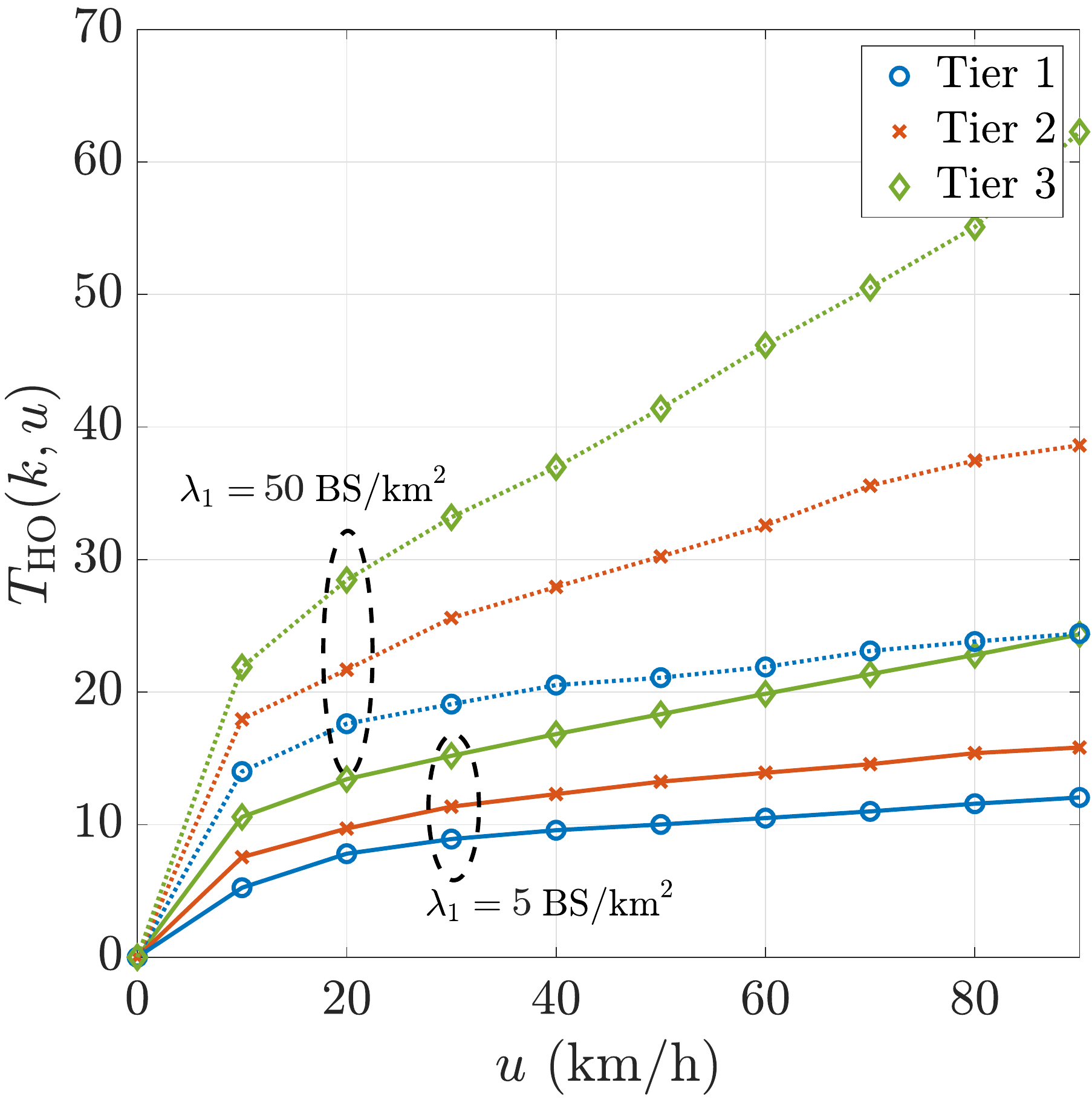}\caption{Mobility-induced time overhead versus $u$ for each access mode strategy, where $\lambda_1=\{5,50\}\ {\rm BS/km}^2$, $\lambda_2=5\lambda_1$, and $\lambda_3=10\lambda_1$}\label{ThirdFigure}\vspace{-0.5cm}
\end{figure}

\begin{figure}[!t]
	\centering
	\subfloat[]{\includegraphics[width=0.6\linewidth]{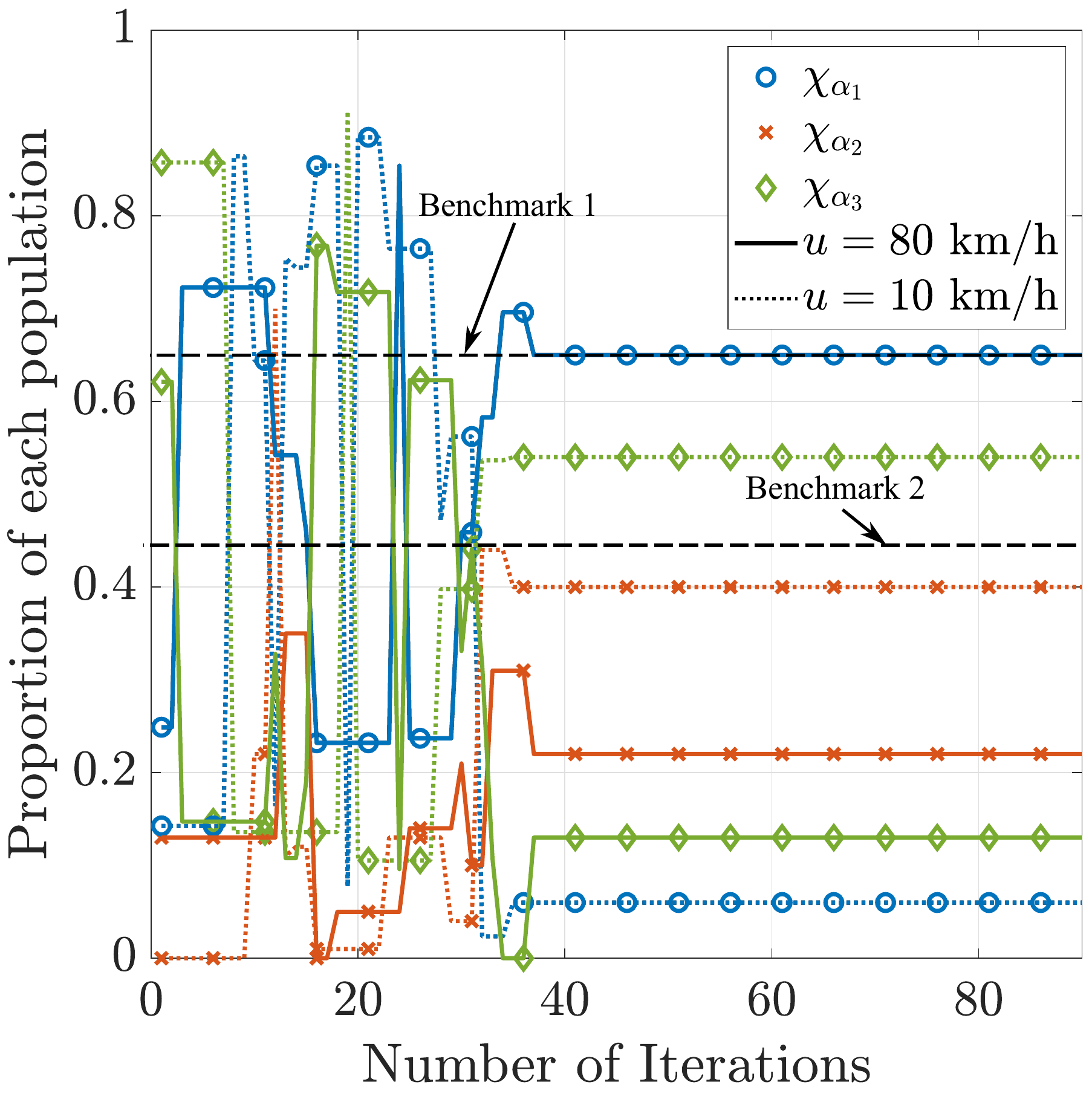}%
		\label{FigureProportionA}}
	\hfil
	\subfloat[]{\includegraphics[width=0.6\linewidth]{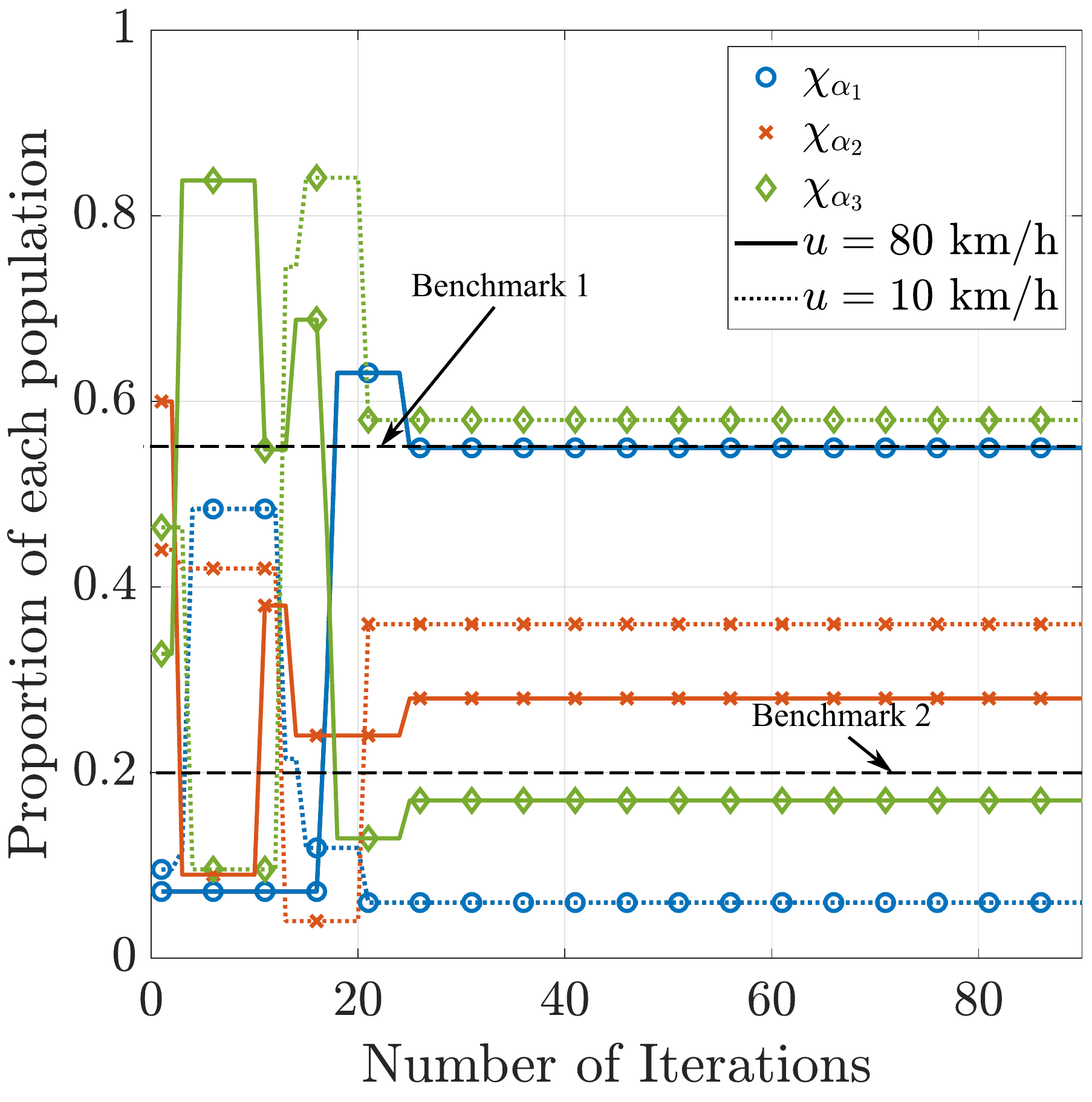}%
		\label{FigureProportionB}}
	\caption{Proportion of MUs with the proposed access mode selection algorithms, where $u=\{10,80\}$ km/h, for the proposed access mode selection policies (a) $\mathcal{G}^{(1)}$. (b) $\mathcal{G}^{(2)}$.}
	\label{fig_sim}\vspace{-0.6cm}
\end{figure}

Fig. \ref{ThirdFigure} reveals the impact of the MUs'velocity and the spatial density of BSs on the mobility-induced time overhead, $T_{\rm HO}(k,u)$, for all access mode strategies. Specifically, we plot the penalty term of each access mode strategy versus the MUs' velocity, $u$, for different spatial densities of the sub-$6$ GHz MCells i.e., $\lambda_1 = \{10^{-4},10^{-2}\}\ {\rm BS/km}^2$, where $\lambda_2=5\lambda_1$ and $\lambda_3=10\lambda_1$. An important observation from this figure is the positive effect of the BSs' density on the mobility-induced time overhead for all access mode strategies. This was expected since, the growing number of BSs leads to decreased BSs' coverage footprints, enhancing the demand of a-InterCH procedures and thus increasing the players' penalty function. Another important observation is the positive impact of MUs' velocity on the mobility-induced time overhead of each access mode strategy. This behaviour is based on the fact that, a MU with higher velocity experiences an increased number of both IntraCH and InterCH procedures, resulting in an increased penalty term (i.e., mobility-induced time overhead).

Fig. \ref{FigureProportionA} and Fig. \ref{FigureProportionB} illustrate the population evolution of each access mode strategy during the evolution process of the games $\mathcal{G}^{(1)}$ and $\mathcal{G}^{(2)}$, respectively, where $u=\{10,80\}$ km/h. Initially, we can easily observe that the proportions of both strategies eventually converge to the evolutionary equilibrium very fast, within less than $40$ iterations. An other important observation is that, both games result in same conclusions with respect to the access mode selection of a MU depending on its speed. More specifically, we can easily observe from both figures that, sub-$6$ GHz access mode is beneficial for the high-speed MUs, while mmWave access mode is more preferable for the static MUs. This was expected since, the lack of mobility of the static MUs allows their communication with the mmWave BSs, offering high data-rate performance, without handover processes. High-speed MUs, on the other hand, sacrifice their data-rate performance by connecting to sub-$6$ GHz BSs in order to reduce the number of handover processes and achieve better average payoff. Furthermore, we can observe that the game formulation $\mathcal{G}^{(2)}$ is more robust in changes of the proportion of each population compared with that of the game formulation $\mathcal{G}^{(1)}$. This was expected since, the achieved payoff functions of each population with the game formulation $\mathcal{G}^{(1)}$ are relatively equal, hence, a slight alteration in the proportion of interfering BSs lead to a significant alteration in the proportion of the MUs at each access mode strategy. Contrary, the enormous channel bandwidth allocated by the mmWave SCells results in well-separated payoff functions, and therefore, a more robust evolution of the MUs' proportion for each access mode strategy is observed. {Fig. \ref{FigureProportionA} and Fig. \ref{FigureProportionB} also compare the population evolution of high speed MUs  in Tier $1$ achieved with our proposed strategy along with that achieved with two conventional MU access mode selection strategies, namely the exhaustive search (denoted as ``Benchmark 1'') and random selection (denoted as ``Benchmark 2'') access mode selection. Note that the exhaustive search strategy always achieves optimal value, but intolerable complexity is introduced especially when the number of MUs is large, while the random selection strategy provides the worst performance.} Finally, we can observe that the game formulation $\mathcal{G}^{(2)}$ reaches the evolutionary equilibrium faster than the game formulation $\mathcal{G}^{(1)}$.

\begin{figure}
	\centering\includegraphics[width=0.6\linewidth]{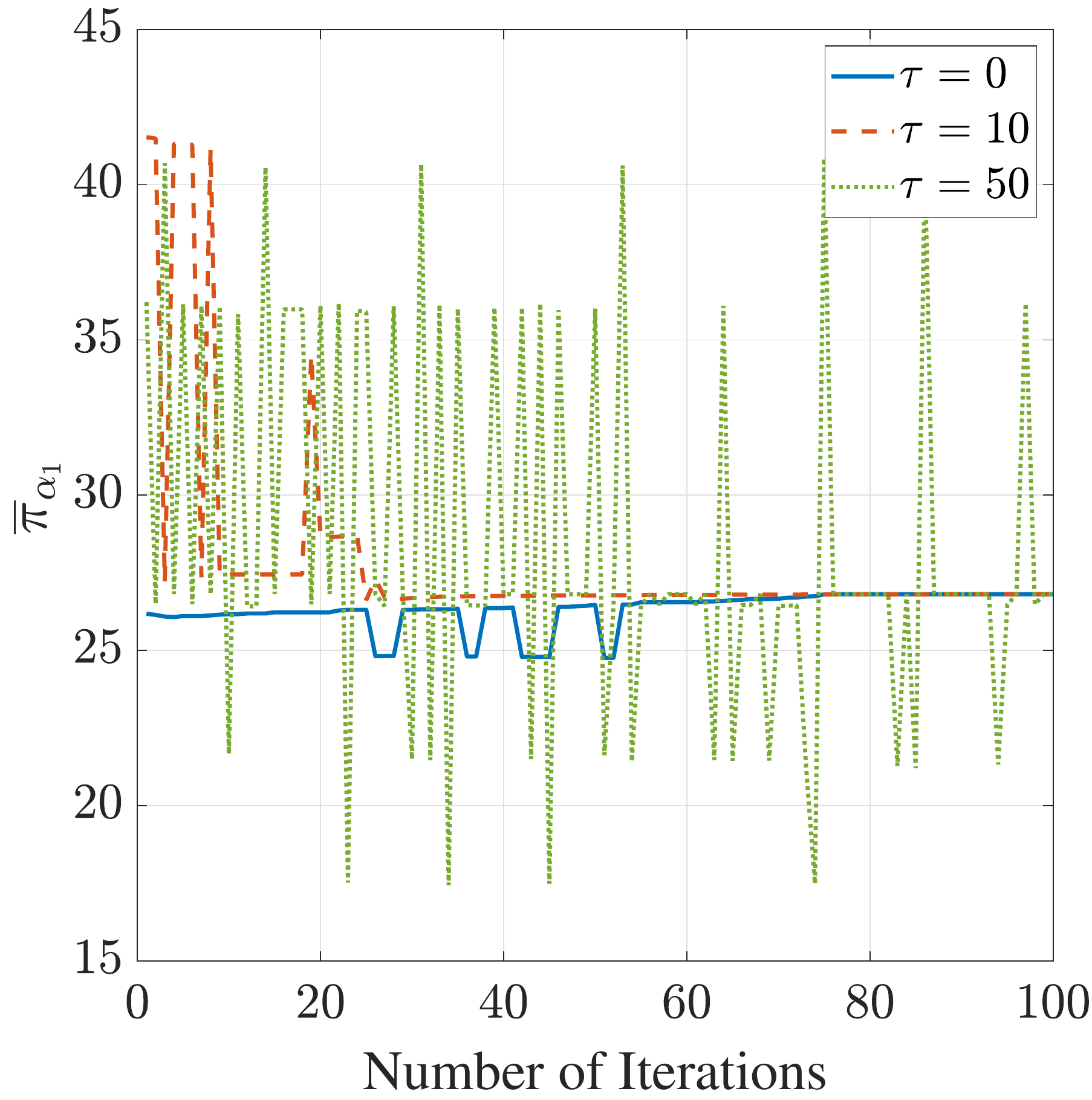}\caption{Average payoff for sub-$6$ GHz MCells versus the number of iterations for different $\tau=\{0,10,50\}$.}\label{FigureDelay}\vspace{-0.5cm}
\end{figure}

\begin{figure}
	\centering\includegraphics[width=0.6\linewidth]{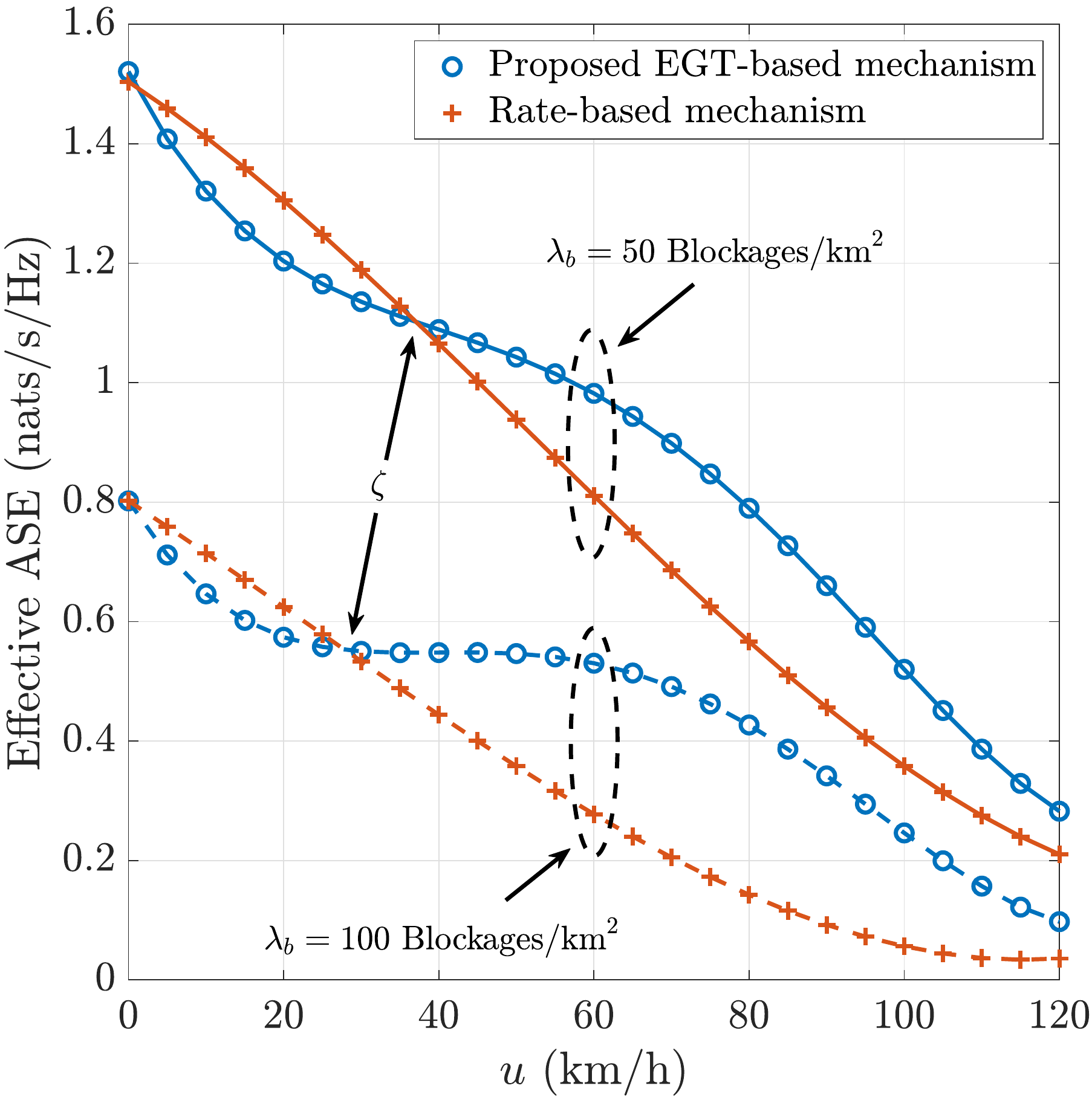}\caption{Average payoff for sub-$6$ GHz MCells versus the number of iterations for different $\tau=\{0,10,50\}$.}\label{Compare}\vspace{-0.5cm}
\end{figure}

Fig. \ref{FigureDelay} demonstrates how information exchange delay $\tau$ affects the dynamics of strategy adaptation, which is assumed to be constant throughout all the iterations. Specifically, Fig. \ref{FigureDelay} shows the average payoff af the MCell access mode strategy versus the number of iterations towards the evolutionary equilibrium. As expected, when $\tau>0$, the dynamics of the strategy adaptation exhibit large fluctuations, leading to a larger number of iterations for the dynamics to converge, and the system becomes less stable. Under a small delay, the system can still converge to the equilibrium, while for large delay, the system will diverge. It is important to mention here that, the same behavior is also observed for the rest of the access mode strategies, however for the sake of clarity, the corresponding curves are omitted from this figure.

Fig. \ref{Compare} highlights the effectiveness of our proposed technique compared to a conventional access mode selection technique for different densities of blockages $\lambda_b=\{50,100\}\ {\rm Blockages/km}^2$. More specifically, we plot the average spectral efficiency (ASE) (i.e., ${\rm ASE}_k=(1-T_{\rm HO}(k,u))^+\mathcal{R}_k$, where $k=\{1,\dots,K\}$) versus the MUs' speed for the proposed EGT-based mechanism, as well as for the conventional maximum Rate-based (denoted as ``Rate-based'') access mode selection technique. We can easily observe that, by increasing the blockage density the achieved network performance decreases for both mechanisms. This observation is based on the fact that, by increasing the blockage density, the communication of the MUs with LoS BSs becomes impossible and thus the achieved ASE significantly decreases. An important observation from this figure is that, at low MUs' velocity values, the conventional mechanism provides slightly better network performance compared to the proposed EGT-based mechanism. This was expected, since with the utilization of the conventional mechanism, a MU is served by the tier that provides the maximum rate while ignoring the negligible mobility-induced time overhead. Contrary, the EGT-based mechanism takes into consideration the time overhead incurred by the MUs' mobility, leading to the association of a MU with the tier that offers a lower average rate with a minor time overhead, and thus, in a reduced network performance. However, by increasing the MUs' velocity beyond a critical point, $\zeta$, our proposed mechanism overcomes the conventional scheme, providing a significantly enhanced network performance. As expected, our proposed technique ensures the connectivity of a MU with the network tier that provides the maximum payoff function (i.e., optimal combination of average achievable rate and mobility-induced time overhead) according to its velocity. Finally, we can easily observe that the critical velocity point, $\zeta$, reduces with the increase of the blockage density. It is straightforward that, by increasing the blockage density, the amount of handover processes as well as the required time overhead are significantly increased for both schemes, allowing our proposed EGT-based scheme to overcome the conventional scheme in lower MUs velocity values.

\section{Conclusion}\label{Conclusion}
For the purpose of addressing the new and inevitable challenges on MU association, in this paper, we have explored a novel access mode selection mechanism in the context of heterogeneous sub-$6$ GHz/mmWave cellular networks. To cope with the dynamic and complicated MUs' association process, two evolutionary games have been formulated, where the MUs, which are considered as the players, are interested in selecting a suitable access mode. By leveraging tools from SG, analytical expressions for MUs' utility and penalty/reward functions are derived, by taking into account both BS and blockage spatial randomness, as well as the MUs' mobility. The dynamics of access mode strategy adoption has been mathematically modelled by the replicator dynamics, and the evolutionary equilibrium has been considered as the stable solution for the formulated problem. Based on the average SINR experienced by a MU, the stability of the equilibrium point has been analytically proven for a two-tier heterogeneous sub-$6$ GHz/mmWave cellular network and under certain practical assumptions. For larger system configurations, the stability of the equilibrium point has been shown by numerical studies. We have also considered the impact of delayed information exchange on the convergence of the proposed algorithm. Our results show that EGT-based access mode selection algorithm is a promising solution to overcome the uncertainties imposed by the MUs' mobility in heterogeneous sub-$6$ GHz/mmWave cellular networks. Finally, we have shown that, our proposed technique offers an enhanced spectral efficiency and connectivity, when compared with the conventional access mode strategies.

\appendices
\section{Proof of Lemma \ref{LemmaLaplace}}\label{Appendix2}
The Laplace transform definition yields
	\begin{align}
		\mathcal{L}_{\mathcal{I}_k}(s) &= \mathbb{E}_{\Phi^{\rm a}_k,\mathcal{M}_k,h_{i,k}}\!\left[e^{-s\sum_{x_{i,k}\in\Phi^{\rm a}_k\backslash x_{0,k}} \mathcal{M}_k P_k h_{i,k} \|x_{i,k}\|^{-a}}\right]\nonumber\\
		&=\mathbb{E}_{\Phi^{\rm a}_k,\mathcal{M}_k}\!\left[\prod_{x_{i,k}\in\Phi^{\rm a}_k\backslash x_{0,k}}\!\frac{1}{1\!+\!s\mathcal{M}_k P_k \|x_{i,k}\|^{-a}}\right]\label{eq2}.
	\end{align}
	Since the set of active LoS BSs, $\Phi^{\rm a}_k$, is modeled as a non-homogeneous PPP with density $\lambda^{\rm a}(r)$, the above expression can be re-written as
	\begin{align}
		\!\mathcal{L}_{\mathcal{I}_k}\!(s)\!=\!\prod_{\mathcal{M}_k}\!\exp\!\left(\!-2\pi p_{\mathcal{M}_k}\!\!\int\limits_{r_0}^{\infty}\!\!\frac{s\mathcal{M}_k P_k r^{1-a}}{1\!+\!s\mathcal{M}_k P_k r^{-a}}\lambda^{\rm a}_k(r){\rm d}r\!\right),\label{eq3}
	\end{align}
	where \eqref{eq3} uses the expression for moment generating function of an exponential random variable and the probability generating functional for a PPP, and $r=\|x_{i,k}\|$. To simplify this integral, we perform the change of variable $z=(s\mathcal{M}_kP_k)^{-\frac{1}{a}} r$, obtaining the final expression.
	
\section{Proof of Lemma \ref{BlockageHandover}}\label{Appendix3}	
\begin{figure}
	\centering\includegraphics[width=0.6\linewidth]{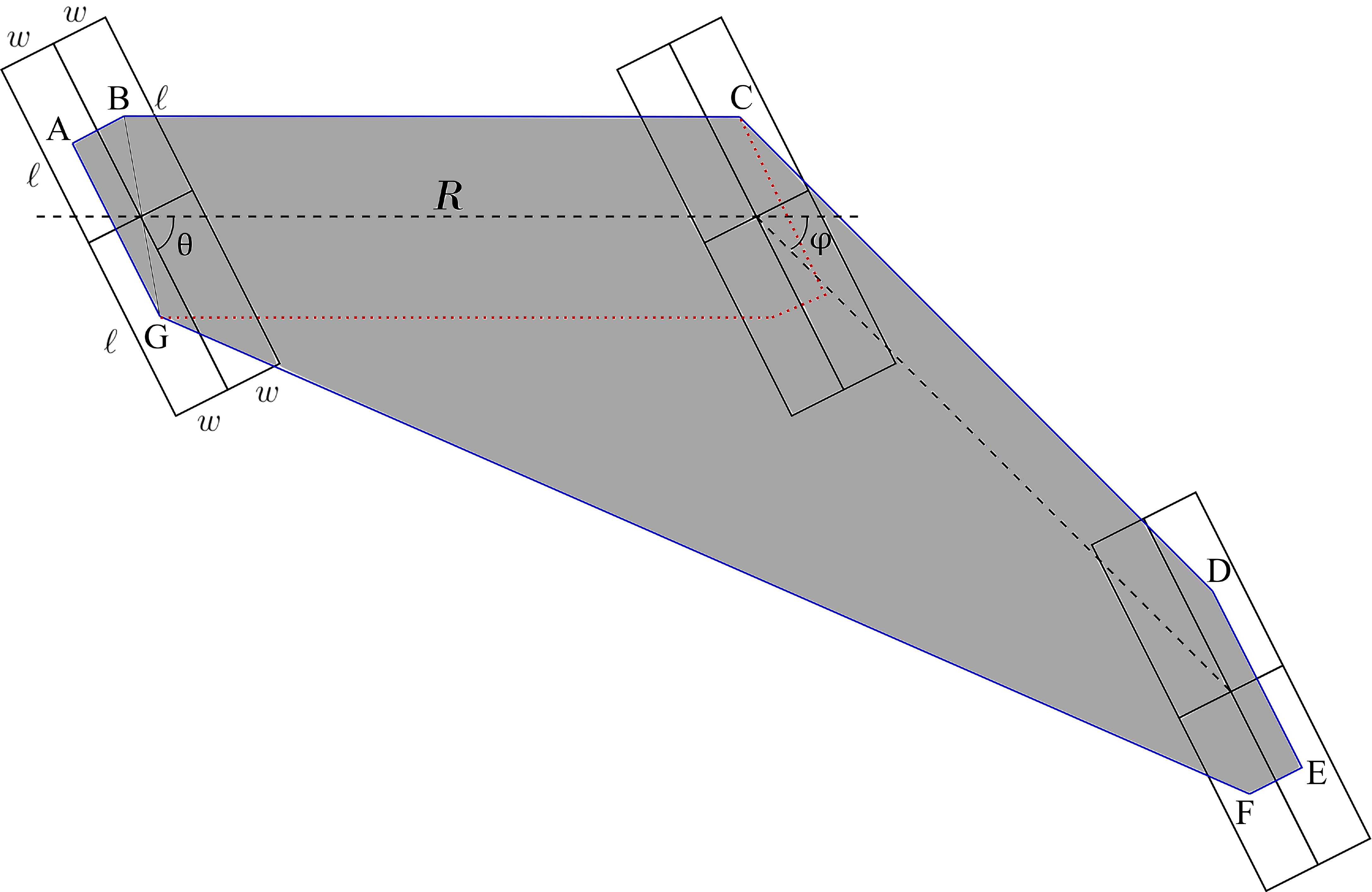}\caption{A link of distance $R$. $A$, $B$, $C$, $D$, $E$, $F$, and $G$ are the centers of the corresponding rectangles (blockages). A rectangle/blockage of of length $\ell$, width $w$, and orientation $\theta$ intersects the link if only if its center falls in the region $ABCDEFG$.
	}\label{BlockageP}\vspace{-0.5cm}
\end{figure}
Fig. \ref{Topology} illustrates the trace for the typical MU (dotted green line), which moves with velocity $u$ and is served by a BS at distance $R$. 
As shown in Fig. \ref{BlockageP}, a blockage (i.e., rectangle) intersects the link between the typical MU and its serving BS if and only if its center falls within the shaded region $S(\ell,w,\theta,\varphi,u)$. Then, conditioned on the blockage characteristics i.e., length, width, and orientation, the area of the shaded region i.e., $|S(\ell,w,\theta,\varphi,u)|$, can be calculated as
\begin{align*}
	&|S(\ell,w,\theta,\varphi,u)|\\&\qquad = \frac{Ru}{2}|\sin(\varphi)|+wu|\cos(\theta-\varphi)|+\ell u|\sin(\theta-\varphi)|\\
	&\qquad=\frac{Ru}{2}|\sin(\varphi)|+wu|\cos(\theta-\varphi)|+\ell u|\sin(\theta-\varphi)|,
\end{align*}
where $\varphi$ is the MU's movement direction and $R$ denotes the distance of the link between the initial location of the typical MU and its serving BS. Recall that the blockages' centers form a PPP with spatial density $\lambda_b$. Therefore, the average number of blockages, $\mathbb{E}[K]$, of which their centers lie within the shaded area, is given by 
\begin{align}
	\mathbb{E}[K] &= \int_L\int_W\int_\Theta \lambda_b|S(\ell,w,\theta,u)|f_L(\ell){\rm d}\ell f_W(w){\rm d}w\frac{1}{2\pi}{\rm d}\theta\\
	&=\frac{u\lambda_b}{2\pi}\left(R\zeta+\mathbb{E}[W]{\rm Si}[2\pi]\right),
\end{align}
where $\zeta$ is a constant variable that is equal to $\zeta=\gamma+\log[\pi]-{\rm Ci}[\pi]$, $\gamma$ represents the Euler–Mascheroni constant; ${\rm Si}[\cdot]$ and ${\rm Ci}[\cdot]$ denote the sine and cosine integral functions, respectively. Thus, the handover probability due to blockage effects i.e., the average rate where the link between a MU and its serving BS is blocked during the MU's movement, is given by $\delta_b=1-\mathbb{P}[K=0]$ \cite{AND}, and hence, the final expression is derived.
\section{Proof of Theorem \ref{Stability}}\label{Appendix4}
We can derive the replicator dynamics of the game $\mathcal{G}^{(1)}$ as
\begin{align}
	\dot{x}^{(1)}_1&=\varrho x^{(1)}_1\left(\pi_1^{(1)}-\chi_1^{(1)}\pi_1^{(1)}-\chi_2^{(1)}\pi_2^{(1)}\right)\nonumber\\
	&=\varrho \chi_1^{(1)}\left(\pi_1^{(1)}\left(1-\chi_1^{(1)}\right)-\chi_2^{(1)}\pi^{(1)}_2\right).
\end{align}
According to the definition i.e., $\chi_1^{(1)}+\chi_2^{(1)}=1$, the above expression can be re-written as
\begin{align}
	\dot{x}^{(1)}_1&=\varrho \chi_1^{(1)}\chi_2^{(1)}\left(\pi_1^{(1)}-\pi^{(1)}_2\right)\nonumber\\
	&=\varrho \chi_1^{(1)}(1-\chi_1^{(1)})\left(\pi_1^{(1)}-\pi^{(1)}_2\right).\label{temp}
\end{align}
In order the considered game to have a stable evolutionary equilibrium, all eigenvalues of the Jacobian of the system of equations should have a negative real part i.e., $\frac{{\rm d}f}{{\rm d}\chi_1^{(1)}}<0$, where $f$ is the right hand side of \eqref{temp}. Hence, 
\begin{align}\label{Derivative}
	\frac{{\rm d}f}{{\rm d}\chi_1^{(1)}}&=\chi_1^{(1)}(1-\chi_1^{(1)})\left(\frac{{\rm d}\pi_1^{(1)}}{{\rm d}\chi_1^{(1)}}-\frac{{\rm d}\pi_2^{(1)}}{{\rm d}\chi_1^{(1)}}\right)\nonumber\\&\qquad\qquad+(\pi_1^{(1)}-\pi_2^{(1)})(\chi_2^{(1)}-\chi_1^{(1)}).
\end{align}
Since at the equilibrium holds that $\pi_1^{(1)}=\pi_2^{(1)}$, then $(\pi_1^{(1)}-\pi_2^{(1)})(\chi_2^{(1)}-\chi_1^{(1)})=0$. Then
\begin{align}\label{DerPayoff1}
	\frac{{\rm d}\pi_1^{(1)}}{{\rm d}\chi_1^{(1)}}=-w_1\prod_{\mathcal{M}_k}\frac{p_{\mathcal{M}_k}G_{0,k}R^2}{\pi\lambda_1 (\chi_1^{(1)})^2\widetilde{r}^2 \mathcal{M}_k(R^2-\widetilde{r}^2)}
\end{align}
and 
\begin{align}\label{DerPayoff2}
	\frac{{\rm d}\pi_2^{(1)}}{{\rm d}\chi_1^{(1)}}=0
\end{align}
Therefore, based on \eqref{Derivative}, \eqref{DerPayoff1}, and \eqref{DerPayoff2}, we conclude that $\dot{x}^{(1)}_1$ at any interior equilibrium point is negative.

\begin{IEEEbiography}[{\includegraphics[width=1in,height=1.25in,clip,keepaspectratio]{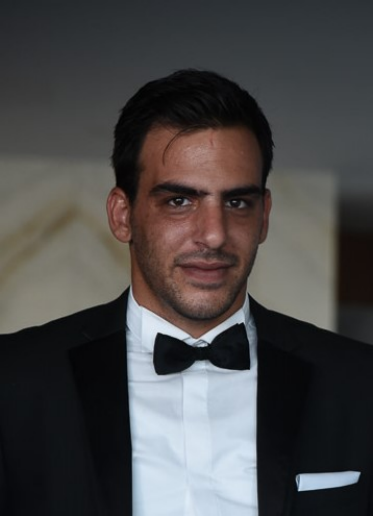}}]{Christodoulos Skouroumounis} (S'15–M'20) received the diploma in Computer Engineer from the Electrical and Computer Engineer Department of National Technical University of Athens (NTUA), Greece, in 2014, and a Ph.D. in Computer Engineer from the University of Cyprus, Cyprus in 2019. He is currently a Post-Doctoral Research Fellow with the Department of Electrical Engineering, Computer Engineering and Informatics, Cyprus University of Technology. His current research interests include full-duplex radio, next-generation communication systems and cooperative networks.
\end{IEEEbiography}

\begin{IEEEbiography}[{\includegraphics[width=1in,height=1.25in,clip,keepaspectratio]{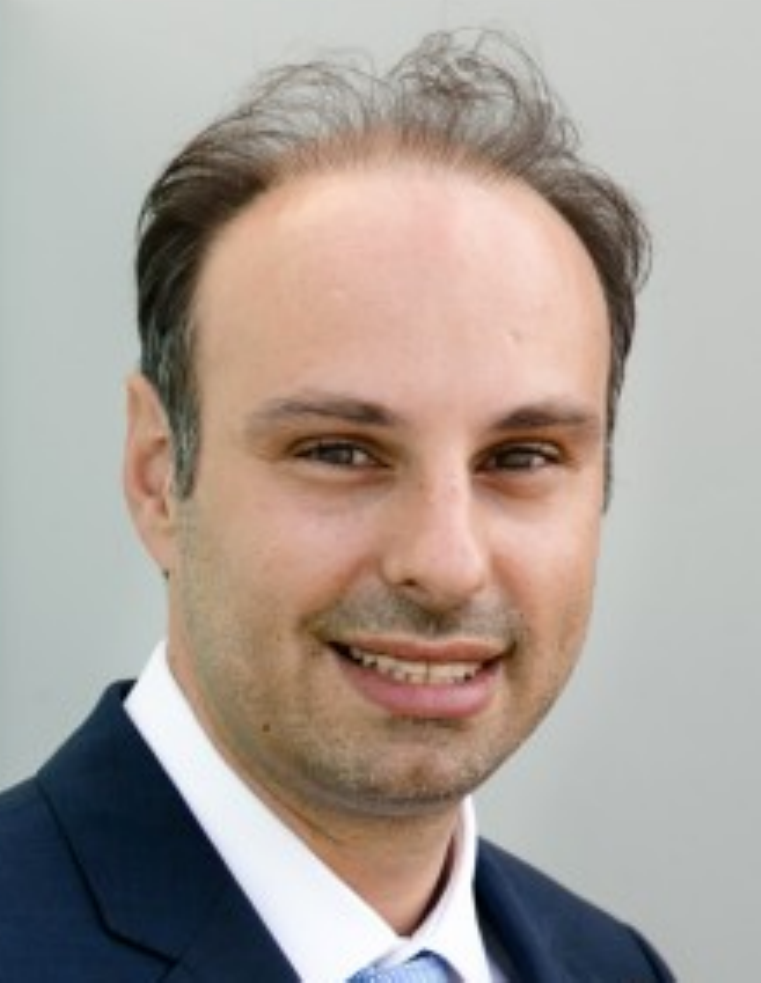}}]{Ioannis Krikidis}(S'03–M'07–SM'12–F'19) received the Diploma degree in computer engineering from the Computer Engineering and Informatics Department (CEID), University of Patras, Greece, in 2000, and the M.Sc. and Ph.D. degrees from \'{E}cole Nationale Sup\'{e}rieure des T\'{e}l\'{e}communications (ENST), Paris, France, in 2001 and 2005, respectively, all in electrical engineering. 
	
	From 2006 to 2007, he worked as a PostDoctoral Researcher with ENST, Paris. From 2007 to 2010, he was a Research Fellow of the School of Engineering and Electronics, The University of Edinburgh, Edinburgh, U.K. He also has held research positions at the Department of Electrical Engineering, University of Notre Dame; the Department of Electrical and Computer Engineering, University of Maryland; the Interdisciplinary Centre for Security, Reliability and Trust, University of Luxembourg; and the Department of Electrical and Electronic Engineering, Niigata University, Japan. He is currently an Associate Professor at the Department of Electrical and Computer Engineering, University of Cyprus, Nicosia, Cyprus. His current research interests include wireless communications, cooperative networks, 5G communication systems, wireless powered communications, and secrecy communications. He serves as an Associate Editor for IEEE TRANSACTIONS ON WIRELESS COMMUNICATIONS, IEEE TRANSACTIONS ON GREEN COMMUNICATIONS AND NETWORKING, and IEEE WIRELESS COMMUNICATIONS LETTERS. He was a recipient of the Young Researcher Award from the Research Promotion Foundation, Cyprus, in 2013, and a recipient of IEEE ComSoc Best Young Professional Award in Academia, 2016. He has been recognized by the Web of Science as a Highly Cited Researcher for 2017, 2018, and 2019. He has received the prestigious ERC Consolidator Grant.
\end{IEEEbiography}
\end{document}